\documentclass[12pt]{amsart}
\usepackage[english]{babel}
\usepackage{amsmath,amsthm,amsfonts,amssymb,epsfig,vector,color,ulem}
\usepackage[left=1in,top=1in,right=1in]{geometry}


\newtheorem{thm}{Theorem}[section]
\newtheorem{lem}[thm]{Lemma}
\newtheorem{prop}[thm]{Proposition}
\newtheorem{cor}[thm]{Corollary}
\newtheorem{df}[thm]{Definition}

\newtheorem{ass}[thm]{Assumption}

\newcommand{\E}{\mathbb{E}}

\newcommand{\prob}{\mathbb{P}}

\newcommand{\R}{\mathbb{R}}
\newcommand{\Z}{\mathbb{Z}}
\newcommand{\N}{\mathbb{N}}
\newcommand{\F}{\mathcal{F}}


\begin{document}

\title[Fluctuations of interface free energies in spin glasses]{Fluctuation Bounds For Interface Free Energies\\ in Spin Glasses}

\author[L.-P. Arguin]{L.-P. Arguin}            
 \address{L.-P. Arguin\\ 
 D\'epartement de Math\'ematiques et Statistique\\
 Universit\'{e} de Montr\'{e}al\\
Montr\'{e}al, QC, Canada}
\thanks{L.-P. A. is supported by a NSERC discovery grant and a grant FQRNT {\it Nouveaux chercheurs}.}
\email{arguinlp@dms.umontreal.ca}

\author[C.M. Newman]{C.M. Newman}            
 \address{C.M. Newman\\ 
 Department of Mathematics\\
 University of California\\
 Irvine, CA 92697; NYU-Shanghai and Courant Institute of Mathematical Sciences\\
 New York University\\
 New York, NY 10012, USA
 }
\thanks{The research of CMN and DLS is supported in part by U.S.~NSF Grants DMS-1207678 and OISE-0730136.}
\email{newman@cims.nyu.edu}

\author[D.L.~Stein]{D.L.~Stein}            
 \address{D.L.~Stein\\ 
 Department\ of Physics and Courant Institute of Mathematical Sciences\\
  New York University\\
 	 New York, NY 10003, USA
 }
\thanks{}
\email{daniel.stein@nyu.edu}

\author[J. Wehr]{J. Wehr}            
 \address{J. Wehr\\ 
Department of Mathematics\\
University of Arizona\\
Tucson, AZ 85721, USA
 }
\thanks{}
\email{wehr@math.arizona.edu}


\date{April 8, 2014}


\keywords{Spin Glasses, Edwards-Anderson Model, Free Energy, Variance bounds} \subjclass[2010]{Primary: 82B44}

\maketitle

\begin{abstract}
  We consider the free energy difference restricted to a finite volume
  for certain pairs of incongruent thermodynamic states  (if they exist) in
  the Edwards-Anderson Ising spin glass at nonzero temperature.  We
  prove that the variance of this quantity with
  respect to the couplings grows proportionally to the volume
  in any dimension greater than or equal to two. As an illustration
  of potential applications, we use this result to restrict the
  possible structure of Gibbs states in two dimensions.
\end{abstract}

\section{Introduction}
\label{sec:intro}

The quantitative dependence of free energy fluctuations on the
quenched disorder is a fundamental thermodynamic property of all
disordered systems. 
Bounds on such fluctuations were used in~\cite{AW90}
to prove uniqueness of the Gibbs state
in a class of two-dimensional disordered systems, including the
two-dimensional random~field Ising~model. 
In the case of a
finite-volume spin glass with a specified boundary condition, the free
energy is a random function of the couplings within the volume: as the
couplings vary, so does the free energy. One may also consider for a
fixed volume the {\it difference\/} in free energy between two
different boundary conditions, say periodic and antiperiodic; this
quantity is similarly a random function of the couplings. Because such
a change in boundary conditions induces an interface between their
respective finite-volume Gibbs states, one can refer equivalently to
the dependence of the resulting {\it interface free energy\/} on the
couplings.  The interface between two spin configurations (in the
present case, each chosen from a different Gibbs state corresponding
to the two different boundary conditions) is defined as the set of
couplings satisfied in one configuration and unsatisfied in the other.

It has been noted by several
authors~\cite{ENS06,HF87,FH87,FH88,NS92,NSjpc03,NSbook} that
determining quantitative bounds on the interface free energy variance
with respect to the couplings would allow a resolution of a
long-standing open problem at the heart of the statistical mechanics
of short-range spin glasses in finite dimensions: how many pure states
(or at zero temperature, ground states) are present in the spin glass
phase?  To begin, one has to specify what one means by multiple
states; this was elucidated nicely in~\cite{HF87} where it was
noted that in the context of spin glasses it is particularly important
to differentiate between {\it incongruent\/} and {\it regionally
  congruent\/} pure states. In the case of $d$-dimensional Ising spin
glasses, the latter are related by a global spin flip everywhere
except near interfaces whose dimensionality is strictly smaller than
$d$; otherwise, the states are incongruent. So, for example,
incongruent {\it ground\/} states differ by a ``space-filling''
interface. 
Whether incongruent pure and ground states exist
has been a subject of controversy for many years. One of the main goals in studying free energy fluctuations of
the corresponding interfaces is to provide the tools for settling this question.

More generally, for a Gibbs state $\Gamma$, consider a correlation
function $\Gamma(\sigma_{x_0}\sigma_{y_0})$ (often denoted
$\langle\sigma_{x_0}\sigma_{y_0}\rangle_\Gamma$) for two fixed,
nearest-neighbor sites $x_0$ and $y_0$. Then multiple incongruent or
regionally congruent states are present if there exists more than one
thermodynamic limit for such a correlation function (along different
subsequences of volumes), say $\Gamma(\sigma_{x_0}\sigma_{y_0})$ and
$\Gamma'(\sigma_{x_0}\sigma_{y_0})$~\cite{FH87}. Incongruent states
are those in which such~$(x_0,y_0)$ constitute a nonvanishing fraction
of edges:
\begin{df}
\label{df: incongruent} 
Two states $\Gamma$ and $\Gamma'$ are incongruent if for some
$\varepsilon>0$ there is a subset of edges $(x_0,y_0)$ with strictly
positive density such that $|\Gamma(\sigma_{x_0}\sigma_{y_0}) -
\Gamma'(\sigma_{x_0}\sigma_{y_0})|>\varepsilon$.
\end{df}
It is known~\cite{NS01a} that if one uses coupling-{\it independent\/}
boundary conditions (which include the usual ones of free, periodic,
or fixed, chosen along a deterministic sequence of volumes) to
generate distinct states, then these states must be incongruent.
Because of this natural property, they were referred to
in~\cite{NSjpc03} as {\it observable\/} states.

In contrast, regionally congruent states, which differ by zero-density
interfaces such as those found in homogeneous ferromagnets using
Dobrushin boundary conditions, could also exist in spin glasses in
principle, but if so could only be generated using coupling-{\it
  dependent\/} boundary conditions~\cite{NS01a}, using procedures as
yet unknown.  Such states, if they exist, would be of mathematical
interest, but whether their presence would carry any physical or
thermodynamical significance is at present unclear. In what follows we
will concern ourselves exclusively with incongruent pure or ground
states.

Several authors, using various techniques, have considered the problem
of free energy fluctuations of a finite-volume Ising spin glass with a
single boundary condition~\cite{AM03,Chatterjee09,Contucci12,ENS06} In
contrast, there are few results for the variance of free energy
differences. Based on heuristic scaling arguments, it was proposed
in~\cite{FH86} that, in general dimension, 
 the variance of the fluctuations of such free energy differences is bounded, up to a constant, by the system's surface area.
This conjecture was
proved for free energy differences between
finite-volume Gibbs states generated by gauge-related boundary
conditions such as periodic and antiperiodic \cite{AFunpub,NSunpub} (see also Sect.~5.7 in~\cite{CG13}).  A key quantity that
remains to be determined is therefore a {\it lower\/} bound for
interface free energy fluctuations.  Indeed, the various competing
pictures of the low-temperature spin glass phase predict different
size dependences of such fluctuations. Determining a lower bound,
however, has remained elusive due to a number of problems
which will be discussed below. In this paper we provide an initial
step by obtaining such a lower bound, for the free energy difference
within a finite volume for certain classes of distinct infinite-volume
Gibbs states  (cf. Assumption \ref{ass}) assuming such classes exist.  We note that the lower bound obtained here is for a
quantity that differs somewhat from that for which an upper bound has
been obtained,
which prevents us from applying
of our result to dimensions higher than two.

\section{Basic Outline of Proof}
\label{sec:outline}

Before proceeding with the technical details of the proof, we present
a brief sketch of its main ideas and basic ingredients. We will always
work in the periodic boundary condition
metastate~\cite{AW90,NS96,NS97,NS98}, although our results can be
extended to any metastate generated using coupling-independent
boundary conditions (free, fixed, and so on).  The technical
definition of the metastate will be presented in
Sect.~\ref{sec:prelim}, but for now a rough definition will suffice:
the metastate can be thought of as a probability measure on all of the
thermodynamic states that appear in an infinite sequence of volumes
with specified boundary conditions.

One longstanding technical hurdle in studying free energy fluctuations
for spin glass Hamiltonians is the problem of ``keeping track'' of the
selected~$\Gamma$ and~$\Gamma'$ among the set of all possible Gibbs
states as couplings vary. In contrast to Hamiltonians for ordered
systems (such as ferromagnets), there is no known connection between
spin glass Gibbs~states and boundary conditions --- or more precisely,
none that can be made in a translation-invariant, measurable way
(which will be needed later). There are several possible ways to
address this issue. For example, if at least some of the Gibbs~states
have nonzero weight in the metastate, one can use these weights as
state ``identifiers''.  A different way to solve the above issue is to
randomly sample $\Gamma$ and $\Gamma'$ among the Gibbs~states in the
metastate(s). We will examine both of these possibilities in this
paper.

A second hurdle concerns the ``cancellation problem'': unlike the
ferromagnet (including the random-field Ising model), the sign of the
energy difference between two spin configurations varies with the
couplings that constitute the interface. Consequently, although each
edge will presumably contribute an energy fluctuation of order one,
the fluctuations may cancel as one moves along the interface. This
could in principle lead to an overall energy fluctuation that varies
sublinearly with the volume, or may even be volume-independent. The
latter may indeed be the case {\it a~priori\/}, and indeed corresponds
to some proposed scenarios of the spin glass phase. The problem is
that, because of these cancellations, any of the usual techniques used
to study fluctuations of free energy differences between incongruent
states could lead to poor estimates of their volume dependence and
therefore inconclusive results.

In order to get more accurate estimates of the volume dependence of
the free energy difference, one needs to find a suitable quantity to
study. There are in principle a number of different possible choices,
all of which behave like the free energy difference between two
states, and all of which should be physically indistinguishable; but
most are not suitable for obtaining a lower bound. More specifically,
previous attempts to obtain a good lower bound failed because the
choice of a free energy difference-type quantity --- typically,
looking at the free energy difference in a finite volume with two
different boundary conditions --- interfered with the overall
translation-covariance properties of the spin glass model in all
space.  Hence, one key feature of the argument in this paper relies on
the choice of a quantity which, although it also is a finite-volume
free energy difference, manages to preserve the overall
translation-invariance properties of the infinite system. There are
several possible choices; we will use a quantity similar to that used
in~\cite{AW90} (see in particular Eq.~(5.11)), which provides several
technical advantages to alternative possibilities.

Preservation of translation-invariance is necessary but not sufficient
to circumvent the cancellation problem. The presence of incongruence,
along with the suitable choice of a quantity to measure free energy
differences between states, allows one to condition free energy
differences on individual couplings in a translation-invariant way;
this results in the desired random variable (for now, call it
$\mu_{xy}$) which reflects the extent of free energy difference
fluctuations on the coupling $J_{xy}$, but whose distribution does not
depend on the choice of bond $(x,y)$.  The most straightforward way to
arrive at such a quantity is to average over all couplings except one,
but (in a manner similar to that described in the preceding paragraph)
such averaging can lead to cancellations such that the variance of
$\mu_{xy}$ tends to zero as the volume increases. 

To address this issue, we adopt a modified approach, using blocks
rather than individual bonds. That is, we now consider $\mu_{xy}$ to
be conditioned on a {\it block\/} $b$ centered at $(x,y)$. Because
$\mu_{xy}$ as a function of the couplings is not a constant, the
variance of $\mu_{xy}$ will not tend to zero for $b$ sufficiently
large; this follows from the martingale convergence theorem, as
discussed in Sect.~\ref{sec: lower}.  Importantly, the size of $b$ is
independent of both $(x,y)$ and $\Lambda$, because of the translation
covariance of $\mu_{xy}$. Consequently, in this approach different
blocks of the large finite volume under study contribute equally to
the variance, leading to a stronger lower bound on the growth of
fluctuations, which indeed are found to grow as the square root of the
volume.

With these considerations in mind, we now sketch the main steps of the
proof. As noted, we work in the periodic boundary condition metastate,
which has strong coupling- and translation-covariance properties built
in.  These covariance properties, along with the
translation-invariance of the coupling distribution itself, naturally
extend to measures defined on the couplings and/or the thermodynamic
states on which the metastate is supported. These ``nice'' properties
allow one to equate a (suitably defined) derivative of the free energy
difference, with respect to a specific coupling $J_{xy}$, to
$\Gamma(\sigma_x\sigma_y)-\Gamma'(\sigma_x\sigma_y)$, the latter being
the difference in the indicated two-spin correlation function
evaluated in $\Gamma$ and $\Gamma'$, respectively. The assumption of
incongruent states implies this difference will be of order~one with
positive probability in the couplings.

The next step is to construct a {\it martingale
  decomposition\/}~\cite{Durrett91} of the free energy difference.
This involves dividing the volume $\Lambda$ into blocks whose size is
independent of $\Lambda$, but which are large enough so that the
metastate average of
$\Gamma(\sigma_x\sigma_y)-\Gamma'(\sigma_x\sigma_y)$, conditioned on
the couplings in the block, is nonzero. We then consider a sequence of
(metastate averages of) the free energy differences in $\Lambda$, with
each element in the sequence conditioned on the couplings in an
increasing number of blocks. The variance of the free energy
difference over {\it all\/} the couplings in $\Lambda$ is easily shown
to be no smaller than the sum of the variances of the differences
between the $k^{\rm th}$ and the $(k-1)^{\rm th}$~elements of the
sequence, with $k$ running from 1~to~$N$, the number of blocks in the
full volume.

The remainder of the proof uses the translation-covariant properties
of the metastate to show that the variance of the $k^{\rm
  th}$~difference is independent of $k$. Because (by the assumption of
existence of incongruence) the variance of the first block is of
order~one independent of the volume (this result doesn't precisely
follow from the incongruence assumption, but requires only a small
amount of further work), the result follows.

In this paper we do not consider upper bounds of the same quantity in
dimensions greater than two.  In two dimensions, however, some results
can be obtained by following the approach of~\cite{AW90} (see also
\cite{bovier} for a detailed exposition).  They are based on the
elementary fact that the free energy difference (in any dimension) is
bounded uniformly by the sum of the couplings on the boundary of
$\Lambda$; in two dimensions, one can arrive at a contradiction using
the martingale central limit theorem, but following~\cite{AW90}, it is
sufficient to show that the values  of
the free energy difference rescaled by $|\Lambda|^{1/2}$ are
unbounded.  While the results are not surprising --- one expects in
fact a single pure thermodynamic state in two dimensions at all
positive temperatures --- they illustrate the potential power of the
techniques described below to determine the structure of the low
temperature spin glass phase.

\section{Preliminaries}
\label{sec:prelim}

We consider the Edwards-Anderson~(EA) Hamiltonian~\cite{EA75} on a finite box $\Lambda=[-L,L]^d\subset \Z^d$
\begin{equation}
\label{eq: EA}
H_{\Lambda, J}(\sigma)=-\sum_{(x,y)\in E(\Lambda)}J_{xy} \sigma_x\sigma_y, \qquad \sigma\in \{-1,1\}^{\Lambda} \ .
\end{equation}
For $B\subset \Z^d$, we write $E(B)$ for the set of edges with both
ends in $B$.  The couplings $J_\Lambda:=(J_{xy}, (x,y)\in E(\Lambda))$
are i.i.d.~$\nu$-sampled random variables.  (We will sometimes abuse
notation and write $\nu$ for the joint distribution of the couplings
as well as its marginals.)  We assume throughout that $\nu (dJ_{xy})$
is continuous and that $\int \nu(dJ_{xy}) J_{xy}^4<\infty$.

To proceed we need to introduce some notation. Write $\Sigma=\{-1,+1\}^{\Z^d}$ and let $\mathcal M_1(\Sigma)$ be the set of (regular Borel) probability measures on $\Sigma$. 
An infinite-volume Gibbs state~$\Gamma$ for the Hamiltonian~\eqref{eq: EA} is an element of $\mathcal M_1(\Sigma)$ that satisfies the DLR equations~\cite{DLR} for that Hamiltonian. 
The Gibbs state~$\Gamma$ induces a probability measure on infinite-volume spin configurations: for a function $f(\sigma)$ on the spins
\begin{equation}
\label{eq:Gamma}
\Gamma(f(\sigma))=\int\ d\Gamma\  f(\sigma)\, .
\end{equation}
The set of Gibbs states corresponding to the coupling realization~$J$ is denoted by $\mathcal G_J$.
The inverse temperature $\beta$ is fixed throughout.

We are now in a position to define the quantity we will be studying.
In a finite box $\Lambda$ and at inverse temperature
$0\le\beta<\infty$, consider the {\it difference of free energies}
between two infinite-volume Gibbs states $\Gamma$ and $\Gamma'$ for
the Hamiltonian~\eqref{eq: EA}
\begin{equation}
\label{eq: F}
F_\Lambda(J,\Gamma,\Gamma')=\log \frac{\Gamma(\exp \beta H_{\Lambda,J}(\sigma))}{\Gamma'(\exp \beta H_{\Lambda,J}(\sigma'))}\ .
\end{equation}
Note that $\Gamma(\exp \beta H_{\Lambda,J}(\sigma))$ is a ratio whose
numerator comprises a Boltzmann factor on spin configurations only in
$\Lambda^c$, the {\it complement\/} of $\Lambda$, while the
denominator is the usual Boltzmann factor on all infinite-volume spin
configurations. Consequently, $F_\Lambda(J,\Gamma,\Gamma')$ amounts to
the difference in free energies within the volume
$\Lambda$ between boundary conditions chosen from the infinite-volume thermodynamic states $\Gamma'$ and $\Gamma$.

For the coupling realization $J$, consider now two probability measures on $\mathcal M_1(\Sigma)$ that are supported on the set of Gibbs states; we will denote them $\kappa_J$ and $\kappa'_J$.
We will study the free energy fluctuations of two independently chosen states $\Gamma$ and $\Gamma'$ under the probability measure
\begin{equation}
\label{eq: M}
M:=\nu(dJ) \ \kappa_J (d\Gamma')\times \kappa'_J(d\Gamma)\ .
\end{equation}
(We will also consider below a case where $\Gamma$ and $\Gamma'$ are
chosen from the same $\kappa_J$.)  The most natural measures
$\kappa_J$ to consider are the ones that are obtained by taking
(deterministic) subsequential limits of finite-volume Gibbs measures,
as discussed in~\cite{AW90,AD11,NS96,NS97,NS98}.  In so doing the
$\kappa_J$'s will inherit two useful invariance properties from the
finite-volume Gibbs measures.

The first of these is translation-covariance. Formally, let $T$ be a
translation of $\Z^d$, and consider the operation mapping $\mathcal
M_1(\Sigma)$ to itself: $\Gamma \mapsto T\Gamma$, where
\begin{equation}
\label{eq:transcov}
T\Gamma(f(\sigma)):= \Gamma(f(T\sigma))\ .
\end{equation}

The second  is covariance under a local modification of the couplings:
for $B\subset \Z^d$ finite and $J_B\in \R^{E(B)}$, we define the operation
$L_{J_B} : \Gamma \mapsto L_{J_B}\Gamma$ where
\begin{equation}
\label{eq: gamma L}
\Big(L_{J_B} \Gamma\Big)(f(\sigma))= \frac{\Gamma\Big(f(\sigma) \exp\Bigl(-\beta H_{B,J}(\sigma)\Bigr)\Big)}{\Gamma\Big(\exp\Bigl(-\beta H_{B,J}(\sigma)\Bigr)\Big)}\, .
\end{equation}
This simply modifies the couplings within a finite subset $B$ of
$\Z^d$. It was shown in~\cite{AW90,AD11,NS96,NS97,NS98} that the
$\kappa_J$'s arising as subsequence limits of finite-volume Gibbs
measures are probability measures on Gibbs states. Such measures are
referred to as {\it metastates\/}~\cite{NS96}.
\begin{df}
\label{df: metastate}
A metastate $\kappa_{\cdot}$ for the EA Hamiltonian on $\Z^d$ is a measurable mapping 
\begin{equation}
\begin{aligned}
\R^{E(\Z^d)} &\to \mathcal M_1(\Sigma)\\
J &\mapsto \kappa_{J}
\end{aligned}
\end{equation}
with the following properties
\begin{enumerate}
\item {\bf Support on Gibbs states.} Every state sampled from the $\kappa_J$ is a Gibbs state for the realization for the couplings. Precisely,
\begin{equation}
\kappa_{J}\Bigl(\mathcal G_{J}\Bigr)=1 \ \text{$\nu$-a.s.}
\end{equation}
\item {\bf Coupling Covariance. } For  $B\subset \Z^d$ finite, $J_B\in \R^{E(B)}$, and any measurable subset $A$ of  $\mathcal M_1(\Sigma)$,
\begin{equation}
\kappa_{J+J_B}(A)= \kappa_{J}(L_{J_B}^{-1}A)
\end{equation}
where $L_{J_B}^{-1} A=\Bigl\{\Gamma\in \mathcal M_1(\Sigma): L_{J_B}\Gamma \in A\Bigr\}$.
\item {\bf Translation Covariance.} For any translation $T$ of $\Z^d$ and any measurable subset $A$ of  $\mathcal M_1(\Sigma)$
\begin{equation}
\kappa_{TJ}(A)=\kappa_{J}(T^{-1}A)\ .
\end{equation}
\end{enumerate}
\end{df}
The translation covariance is a direct consequence when finite-volume Gibbs measures with periodic boundary conditions are considered.
For other coupling-independent boundary conditions, such as free or periodic, it can be recovered by taking an average of the translates of 
the finite-volume Gibbs measures~\cite{AW90,ADNS10,NS01b}. 

\section{Main results}
\label{sec:main}

Our goal is to understand how the random variable $F_\Lambda$ fluctuates under the measure $M$ defined in~\eqref{eq: M}.
Of course, the appearance of nontrivial fluctuations requires that the states $\Gamma$ and $\Gamma'$ differ with positive probability:
\begin{equation}
\label{ass: ideal}
M\Bigl\{(J,\Gamma,\Gamma'): \Gamma\neq \Gamma'\Bigr\}>0\ .
\end{equation}
Recall that two elements of $\mathcal M_1(\Sigma)$ differ if and only
if there exists a correlation function for which the expectations in
$\Gamma$ and $\Gamma'$ are different.  In this paper, we look at the
simplest case of~\eqref{ass: ideal} where the first moment of the
one-edge correlation function differs.  This implies the existence of
incongruent Gibbs states.  Our working assumption will therefore be:
\begin{ass}
\label{ass}
There exists an edge $(x,y)\in E(\Z^d)$ such that 
\begin{equation}
\label{eq: ass}
\nu\Bigl\{ J: \kappa_J\big(\Gamma(\sigma_x\sigma_y)\big)\neq \kappa'_J\big(\Gamma'(\sigma'_x\sigma'_y)\big)\Bigr\}>0\ .
\end{equation}
\end{ass}

Note that if the assumption is satisfied for one edge then it is satisfied for all edges, by the translation covariance of the metastates and the translation invariance of the coupling distribution $\nu$.
Also, the assumption implies the existence of an incongruent pair $(\Gamma,\Gamma')$ in the sense of Definition \ref{df: incongruent}:
 there must exist $\varepsilon>0$ for which
\begin{equation}
\lim_{\Lambda\to\Z^d} \frac{1}{|E(\Lambda)|} \sum_{(x,y)\in E(\Lambda)}\kappa_J\times\kappa_J'
\left(
 |\Gamma(\sigma_x\sigma_y)- \Gamma'(\sigma'_x\sigma'_y)|
\right)>\varepsilon \ \text{on a set of $J$'s of positive probability.}
\end{equation}
Thus, $\kappa_J\times\kappa'_J$ must sample incongruent pairs.
We remark that the existence of incongruent pairs does not imply Assumption \ref{ass} since the quantity $\Gamma(\sigma_x\sigma_y)- \Gamma'(\sigma'_x\sigma'_y)$ might fluctuate yet have zero mean when sampled from $\kappa_J\times \kappa_J'$.

Our main result is to prove a lower bound
on the fluctuations of $F_\Lambda$ of the order of square root of the
volume in any dimension under this incongruence assumption.
\begin{thm}
\label{thm: lower}
If Assumption \ref{ass} holds, then there exists a constant $c>0$ such that for any $\Lambda=[-L,L]^d \subset \Z^d$
the variance of $F_\Lambda$ under $M$ satisfies
\begin{equation}
\text{Var}_M\Big(F_\Lambda\Big)\geq c |\Lambda |\ .
\end{equation}
\end{thm}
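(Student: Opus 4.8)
The plan is to bound $\mathrm{Var}_M(F_\Lambda)$ from below by the sum of the squared increments of a martingale obtained by revealing the couplings of $\Lambda$ block by block, and then to show that each increment contributes at least a fixed positive amount, uniformly in the volume, by combining the incongruence of Assumption~\ref{ass} with the covariance properties of the metastate.

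\emph{Reduction and the covariant derivative.} First I would pass from $F_\Lambda$ to its conditional expectation given the couplings. Writing $\mu_{xy}:=\Gamma(\sigma_x\sigma_y)-\Gamma'(\sigma_x\sigma_y)$ and $\hat G(J_\Lambda):=\E_M[F_\Lambda\mid J_\Lambda]$, the law of total variance gives $\mathrm{Var}_M(F_\Lambda)\ge \mathrm{Var}_\nu(\hat G)$, so it suffices to bound $\mathrm{Var}_\nu(\hat G)$. The key algebraic input is a derivative identity: for $(x,y)\in E(\Lambda)$,
\begin{equation}
\frac{\partial}{\partial J_{xy}}\hat G=-\beta\,\E_M\!\left[\mu_{xy}\mid J_\Lambda\right]=:-\beta\,\Phi_{xy}(J_\Lambda).
\end{equation}
To prove this I would perturb $J_{xy}\mapsto J_{xy}+t$ and use the coupling covariance of Definition~\ref{df: metastate} to replace $\kappa_{J+t}$ by $L_{t}\kappa_J$ for the corresponding one-edge modification; the Boltzmann factor generated by $L_t$ exactly cancels the explicit $t$-dependence of $H_{\Lambda,J}$ inside $F_\Lambda$, so that the only remaining $t$-dependence is through $-\log\Gamma(e^{\beta t\sigma_x\sigma_y})+\log\Gamma'(e^{\beta t\sigma_x\sigma_y})$, whose derivative at $t=0$ is $-\beta\mu_{xy}$. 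Since the states are now sampled from the $t$-independent $\kappa_J$, the derivative passes through the metastate average and yields the displayed formula. By Assumption~\ref{ass}, $\Phi_{xy}\not\equiv0$, hence $\E_\nu[\Phi_{xy}^2]>0$, and by translation covariance this positive number is the same for every edge.

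\emph{Martingale decomposition.} I would tile $\Lambda$ by $N$ disjoint blocks $b_1,\dots,b_N$ of a fixed side $R$ (to be chosen below), so that $N\ge c'|\Lambda|$. Setting $\mathcal F_k=\sigma(J_{b_1},\dots,J_{b_k})$ and $Y_k=\E_\nu[\hat G\mid\mathcal F_k]$, one obtains a martingale with $Y_0=\E\hat G$ and $Y_N=\hat G$, whose increments $D_k=Y_k-Y_{k-1}$ are orthogonal, giving the identity $\mathrm{Var}_\nu(\hat G)=\sum_{k=1}^N\E[D_k^2]$. The theorem then follows once I show $\E[D_k^2]\ge c_0>0$ with $c_0$ independent of $k$ and $\Lambda$.

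\emph{Per-block lower bound (the crux).} For each block let $e_k$ be its central edge and put $\Phi^{(k)}:=\E_M[\mu_{e_k}\mid J_{b_k}]$, the correlation difference conditioned on that block alone; it depends only on $J_{b_k}$ and, being a translate of one reference quantity, has a law independent of $k$. Because $D_k$ is the orthogonal projection of $\hat G$ onto the mean-zero functions of $\mathcal F_k$, I would use $\E[D_k^2]\ge \mathrm{Cov}(\hat G,\Phi^{(k)})^2/\mathrm{Var}(\Phi^{(k)})$, where the denominator is at most $\E[(\Phi^{(k)})^2]\le 4$, and the numerator is rewritten through $\partial_{J_{e_k}}\hat G=-\beta\Phi_{e_k}$ and integration by parts in the block couplings. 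The martingale (L\'evy) convergence theorem keeps this from collapsing: as $R\to\infty$ one has $\Phi^{(k)}\to\Phi_{e_k}$ in $L^2$, so $\E[(\Phi^{(k)})^2]\to\E[\Phi_{e_k}^2]>0$, and I fix $R$ once and for all so that $\E[(\Phi^{(k)})^2]\ge\tfrac12\E[\Phi_{e_k}^2]=:c_1$. Conditioning on the whole block rather than on the single edge $e_k$ is essential here: it pins down the order-one quantity $\mu_{e_k}$ and thereby defeats the sign cancellations that would drive the variance to zero if one averaged over all couplings but one. I expect \emph{this} to be the main obstacle---controlling $\mathrm{Cov}(\hat G,\Phi^{(k)})$ from below uniformly, i.e. showing the block increment genuinely registers the non-degenerate block-conditioned correlation difference rather than being washed out along the interface. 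Granting it, one concludes $\mathrm{Var}_M(F_\Lambda)\ge\sum_k\E[D_k^2]\ge Nc_0\ge c|\Lambda|$; the two reductions, the martingale identity, and the $k$-independence of the per-block constant via translation covariance are then routine.
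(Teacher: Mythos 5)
Your overall architecture matches the paper's: reduce to $\mathrm{Var}\bigl(\E_M[F_\Lambda\mid J_\Lambda]\bigr)$ by total variance, decompose into a block-revealing martingale with $N\asymp|\Lambda|$ blocks of a fixed size $R$, fix $R$ via the martingale convergence theorem so that the block-conditioned correlation difference $\Phi^{(k)}=\E_M[\mu_{e}\mid J_{b_k}]$ is not a.s.\ zero (this is the paper's condition \eqref{eq: key}), and invoke translation covariance for the $k$-independence of the per-block constant. The derivative identity you state is the paper's Lemma~\ref{lem: deriv} (up to sign, which is immaterial). One small slip: since the inter-block edges are revealed by none of the $\mathcal F_k$, you get $Y_N\neq\hat G$ and only the inequality $\mathrm{Var}(\hat G)\ge\sum_k\E[D_k^2]$, which is all you need.

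The genuine gap is exactly where you flag it: the per-block lower bound. The projection inequality $\E[D_k^2]\ge \mathrm{Cov}(\hat G,\Phi^{(k)})^2/\mathrm{Var}(\Phi^{(k)})$ shifts the burden to a lower bound on $\mathrm{Cov}(\hat G,\Phi^{(k)})=\mathrm{Cov}\bigl(\E_M[F_\Lambda\mid J_{b_k}],\,\Phi^{(k)}\bigr)$, i.e.\ the covariance of a function of $J_{b_k}$ with a component of its own gradient. There is no reason this is bounded away from zero: already for a single coupling $J$ with symmetric law and $f(J)=J^2$ one has $\mathrm{Cov}(f,f')=0$ while $f$ is non-constant and $f'$ is non-degenerate. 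So ``granting it'' amounts to granting the theorem. The repair, which is the paper's route, stays inside your own projection inequality but uses the optimal test function $g=\E_M[F_\Lambda\mid J_{b_k}]-\E_M[F_\Lambda]$, for which the bound collapses to $\E[D_k^2]\ge\mathrm{Var}\bigl(\E_M[F_\Lambda\mid J_{b_k}]\bigr)$ with no covariance left to control. One then shows this variance is a positive constant independent of $k$ and $\Lambda$ by writing, via \eqref{eq: sym var} and the fundamental theorem of calculus,
\[
\E_M[F_\Lambda\mid J_{b_k}=z']-\E_M[F_\Lambda\mid J_{b_k}=z'']=\beta\int_{z''\to z'}\E_M\bigl[\delta_{b_k}(\Gamma,\Gamma')\mid J_{b_k}=z\bigr]\cdot dz\,,
\]
and observing that the integrand --- the full vector of block-conditioned correlation differences furnished by Lemmas~\ref{lem: cond} and~\ref{lem: deriv} --- does not depend on $\Lambda$ at all, and cannot vanish $\nu$-a.s.\ without contradicting your (correct) choice of $R$. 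That $\Lambda$-independence of the gradient of $\E_M[F_\Lambda\mid J_{b_k}]$ is the one idea your write-up is missing; the rest of your outline survives intact.
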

The theorem is proved in Section~\ref{sec: lower}, using a martingale decomposition of $F_\Lambda$ as discussed in Sec.~\ref{sec:outline}. 

\bigskip

{\it Remark.\/} The results of~\cite{WA90} (see in particular
Theorem~2.3) imply that the variance of the free energy difference
satisfies an {\it upper\/} bound of the same form but with a larger
constant, as long as the coupling distribution has a finite second
moment. 

\bigskip

Lower bounds of the type of Theorem~\ref{thm: lower} can be used to rule out certain structures of the metastates and of the underlying Gibbs state in dimension $d=2$.
As an example, we prove the following corollary in Sect.~\ref{sec: d=2}.
\begin{cor}
\label{cor: 2d}
In $d=2$, for every pair of metastates $\kappa_\cdot$ and $\kappa'_\cdot$ and every edge $(x,y)\in E(\Z^d)$, 
\begin{equation}
\label{eq:2d1}
\kappa_J(\Gamma(\sigma_x\sigma_y))=\kappa'_J(\Gamma'(\sigma'_x\sigma'_y)) \ \text{ $\nu$-a.s.}
\end{equation}
\end{cor}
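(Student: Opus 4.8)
The plan is to argue by contradiction, using Theorem~\ref{thm: lower} as the source of a lower bound and the boundary structure of $F_\Lambda$ in two dimensions as the source of an incompatible upper bound, following the strategy of~\cite{AW90}. Suppose the conclusion fails. Then there is a pair of metastates $\kappa_\cdot,\kappa'_\cdot$ and an edge $(x,y)$ for which $\kappa_J(\Gamma(\sigma_x\sigma_y))\neq \kappa'_J(\Gamma'(\sigma'_x\sigma'_y))$ on a set of $J$ of positive $\nu$-measure. This is precisely the content of Assumption~\ref{ass} for that edge, and hence---by the translation covariance of the metastates and the translation invariance of $\nu$ noted after the assumption---for every edge. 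Theorem~\ref{thm: lower} then applies and yields a constant $c>0$ with $\mathrm{Var}_M(F_\Lambda)\geq c|\Lambda|$ for all $\Lambda=[-L,L]^2$; since $|\Lambda|\asymp L^2$, the standard deviation of $F_\Lambda$ grows at least linearly in $L$.

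Against this I would place the elementary boundary bound. The interface free energy $F_\Lambda$ is bounded uniformly, over all spin configurations and all choices of states, by a constant times the sum of the magnitudes of the couplings on the boundary of $\Lambda$: pushing the interface between $\Gamma$ and $\Gamma'$ out to $\partial\Lambda$ costs at most the coupling energy carried by the boundary edges, so $|F_\Lambda|\leq K\sum_{e}|J_e|$, the sum running over the $O(L)$ edges meeting $\partial\Lambda$. Because $\nu$ has finite mean, the law of large numbers gives $\sum_e |J_e|\leq K'L$ eventually almost surely, so the rescaled quantities $F_\Lambda/|\Lambda|^{1/2}=F_\Lambda/L$ form an almost surely (asymptotically) bounded, hence tight, family. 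The first thing to establish carefully is therefore this boundary bound for the present $F_\Lambda$, which involves infinite-volume states rather than two boundary conditions on a single box; this is the adaptation of the energy argument of~\cite{AW90}.

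The contradiction cannot come from comparing second moments alone: the boundary bound only controls the size of $F_\Lambda$, giving $\mathrm{Var}_M(F_\Lambda)=O(L^2)=O(|\Lambda|)$, which merely matches the lower bound and the general upper bound of~\cite{WA90}. The genuine mechanism, and the main obstacle, is that a uniformly bounded family cannot possess a nondegenerate Gaussian scaling limit. Concretely, I would apply the martingale central limit theorem to the block decomposition of $F_\Lambda$ used in Theorem~\ref{thm: lower}: the increments are bounded by $O(1)$ (each is a conditioned correlation difference), there are $N=O(L^2)$ of them, and one checks the Lindeberg-type conditions together with $\mathrm{Var}_M(F_\Lambda)\asymp L^2$. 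This forces $(F_\Lambda-\E_M F_\Lambda)/\sqrt{\mathrm{Var}_M(F_\Lambda)}$ to converge to a standard normal, so, since $\sqrt{\mathrm{Var}_M(F_\Lambda)}\asymp L=|\Lambda|^{1/2}$, the variables $(F_\Lambda-\E_M F_\Lambda)/|\Lambda|^{1/2}$ are unbounded in the limit---contradicting the tightness forced by the boundary bound. Following~\cite{AW90}, one can in fact bypass the full central limit theorem: it suffices to show directly that $F_\Lambda/|\Lambda|^{1/2}$ is unbounded, which already contradicts the boundary bound. Either way the assumed inequality of correlations is untenable in $d=2$, proving~\eqref{eq:2d1}. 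The delicate points will be verifying the increment and Lindeberg conditions for the martingale central limit theorem and ensuring that the variance is of exact order $|\Lambda|$ along the relevant volumes.
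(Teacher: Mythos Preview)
Your overall architecture matches the paper's: argue by contradiction, invoke Assumption~\ref{ass}, combine the boundary bound $|F_\Lambda|\le 4\beta\sum_{e\in\partial\Lambda}|J_e|$ with a Gaussian-type lower tail from a martingale argument, and note that in $d=2$ the surface scales like $|\Lambda|^{1/2}$ so the two are incompatible. The paper carries this out slightly differently in implementation, and the differences are worth knowing.

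First, the paper works not with $F_\Lambda$ but with $\overline F_\Lambda(J):=M(F_\Lambda\mid\F_\Lambda)$, the average over the metastate sampling. The boundary bound then yields a \emph{deterministic} estimate $|\overline F_\Lambda|\le 4\beta|\partial\Lambda|\,\nu(|J_e|)$ after integrating the outside couplings, hence an exponential moment bound $\nu(\exp t\,\overline F_\Lambda/|\partial\Lambda|)\le e^{4\beta t}$. The contradiction is obtained against a lower bound $\nu(\exp t\,\overline F_\Lambda/|\partial\Lambda|)\ge e^{ct^2}$, using a weak form of the martingale CLT (Proposition~\ref{prop: clt}, from~\cite{AW90}) that only delivers a lower bound on the moment generating function rather than full convergence in law. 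This is exactly the ``bypass the full CLT'' route you mention, made precise.

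Second, the paper does \emph{not} use the block decomposition from the proof of Theorem~\ref{thm: lower} for this step; it switches to a single-edge martingale $Y_{\Lambda,k}=M(F_\Lambda\mid\F_{\Lambda,\preceq e_k})$ in lexicographic order. The reason is that the hard hypothesis to verify is the convergence in probability of the predictable quadratic variation $\sum_k \nu(\Delta Y_{\Lambda,k}^2\mid\F_{\Lambda,\preceq e_{k-1}})/|E(\Lambda)|$ to a positive constant $\sigma^2$. This is where the real work lies: the paper introduces $\Lambda$-independent surrogates $\Delta\widetilde Y_k$, shows they are translates of a fixed $L^2$ function, applies von Neumann's ergodic theorem, and then controls the error from replacing $\widetilde Y$ by $Y$ via $L^2$-martingale convergence. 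Your proposal flags ``verifying the increment and Lindeberg conditions'' as delicate, which is right, but the Lindeberg part is in fact routine (the increments are bounded by $2\beta(|J_{e_k}|+\nu|J_e|)$, not $O(1)$ uniformly, but have uniformly bounded fourth moments); it is the quadratic-variation convergence that carries the weight and requires the ergodic/approximation argument you have not yet outlined.
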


It is important to remark that Assumption~\ref{ass} can be used to
study the structure of a single metastate.  If we suppose the
metastate is supported on more than one state, then it is possible to
condition this measure on two different subsets of incongruent states
in order to produce two different measures $\kappa_J$ and
$\kappa_J'$. Then we would be in position to use Theorem \ref{thm:
  lower} and, in two dimensions, Corollary \ref{cor: 2d}.  However,
the subsets on which we are conditioning must be chosen carefully so
that the coupling covariance and the translation covariance of a
metastate still hold for the conditional measures.  To illustrate this
method, we apply it to rule out possible structures of the metastate
in dimension two.
\begin{cor}
\label{cor: countable}
A metastate $\kappa_\cdot$ in $d=2$ cannot be supported on a countably infinite number of incongruent states, that is it cannot be of the form
$$
\kappa_J(d\Gamma)=\sum_{\alpha\in \mathcal A}p_\alpha\delta_{\Gamma^\alpha}
$$
for a countable index set $\mathcal A$, $\sum_\alpha p_\alpha=1$ and
with incongruent states $\Gamma^\alpha\in \mathcal G_{J}$ in the sense
of Definition~\ref{df: incongruent}
\end{cor}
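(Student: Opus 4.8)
The plan is to reduce Corollary \ref{cor: countable} to Corollary \ref{cor: 2d} by constructing, from a single metastate $\kappa_\cdot$ supported on countably many incongruent states, an auxiliary \emph{pair} of metastates $\kappa_\cdot$ and $\kappa'_\cdot$ to which Corollary \ref{cor: 2d} applies, and then deriving a contradiction. The whole argument takes place in $d=2$, and the working hypothesis for contradiction is that $\kappa_J(d\Gamma)=\sum_{\alpha\in\mathcal A}p_\alpha\delta_{\Gamma^\alpha}$ with $\mathcal A$ countably infinite, all $p_\alpha>0$, and the $\Gamma^\alpha$ pairwise incongruent.

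First I would produce two genuine metastates out of the single one by conditioning on disjoint subsets of the index set $\mathcal A$. Concretely, partition $\mathcal A=\mathcal A_1\sqcup\mathcal A_2$ into two nonempty pieces and set
\begin{equation}
\kappa_J^{(i)}(d\Gamma)=\frac{1}{\sum_{\alpha\in\mathcal A_i}p_\alpha}\sum_{\alpha\in\mathcal A_i}p_\alpha\,\delta_{\Gamma^\alpha},\qquad i=1,2,
\end{equation}
taking $\kappa_\cdot=\kappa_\cdot^{(1)}$ and $\kappa'_\cdot=\kappa_\cdot^{(2)}$. The essential point, flagged in the text preceding the corollary, is that these conditional measures must \emph{again} be metastates, i.e.\ they must satisfy the coupling- and translation-covariance axioms of Definition \ref{df: metastate}. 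This is exactly where care is needed: an arbitrary partition of $\mathcal A$ need not be preserved by the maps $L_{J_B}$ and $T$ that appear in those axioms. The resolution is to choose the partition in a covariant way. Since the weights $p_\alpha$ are the natural ``state identifiers'' discussed in Section~\ref{sec:outline}, and since $L_{J_B}$ and $T$ permute the atoms $\Gamma^\alpha$ while the covariance relations constrain how the weights transform, I would define the partition through a measurable, translation- and coupling-invariant function of the weight structure --- for instance grouping states by a threshold or an ordering of the $p_\alpha$ that is intrinsic to the decomposition. With such a covariantly-defined splitting, both $\kappa_\cdot^{(1)}$ and $\kappa_\cdot^{(2)}$ inherit all three metastate properties.

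Next I would verify that this pair violates the conclusion of Corollary \ref{cor: 2d}. Because the $\Gamma^\alpha$ are pairwise incongruent, any two states drawn from $\mathcal A_1$ versus $\mathcal A_2$ disagree on a positive density of edges, so by Definition \ref{df: incongruent} there is some edge $(x,y)$ and some $\varepsilon>0$ with $|\Gamma(\sigma_x\sigma_y)-\Gamma'(\sigma'_x\sigma'_y)|>\varepsilon$ on a positive-measure set of couplings. The remaining task is to upgrade this pointwise disagreement into a disagreement of the \emph{metastate averages} $\kappa_J^{(1)}(\Gamma(\sigma_x\sigma_y))$ and $\kappa_J^{(2)}(\Gamma'(\sigma'_x\sigma'_y))$ for some edge on a positive-probability set of $J$ --- that is, into a violation of \eqref{eq:2d1}. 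Here I would invoke the remark following Assumption \ref{ass}: incongruence alone does not force the averages to differ, since fluctuations could average to zero. So one must argue that \emph{some} edge evades this cancellation. The clean way is to average over edges: by translation covariance the averaged incongruence of a positive density of edges cannot be annihilated for every single edge, forcing $\kappa_J^{(1)}(\Gamma(\sigma_x\sigma_y))\neq\kappa_J^{(2)}(\Gamma'(\sigma'_x\sigma'_y))$ on a positive set for at least one $(x,y)$. This directly contradicts Corollary \ref{cor: 2d}, completing the proof.

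The main obstacle I anticipate is the covariant choice of the partition $\mathcal A=\mathcal A_1\sqcup\mathcal A_2$ in the first step: guaranteeing that the conditional measures are still metastates, rather than merely probability measures on Gibbs states, is the delicate part, since the coupling-covariance axiom links measures at $J$ and $J+J_B$ through $L_{J_B}$, which acts nontrivially on the supporting states. The countability of $\mathcal A$ is what makes this tractable --- one can enumerate the atoms and build a measurable selection of the partition compatible with the group action --- and it is precisely this feature that the hypothesis ``countably infinite'' supplies; the argument would break down for a general continuous metastate, which is why the corollary is stated only for the countable case.
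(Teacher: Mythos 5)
Your overall strategy --- condition the given metastate to manufacture a pair of metastates and then contradict Corollary \ref{cor: 2d} --- is the right one, and is indeed what the paper does. But your second step has a genuine gap, and it is precisely the cancellation problem that the paper's remark after Assumption \ref{ass} warns about. If $\kappa^{(1)}_J$ and $\kappa^{(2)}_J$ are nontrivial mixtures $\sum_{\alpha\in\mathcal A_i}\tilde p_\alpha\delta_{\Gamma^\alpha}$, then pairwise incongruence of the atoms does \emph{not} imply $\kappa^{(1)}_J(\Gamma(\sigma_x\sigma_y))\neq\kappa^{(2)}_J(\Gamma'(\sigma'_x\sigma'_y))$ for some edge: the weighted sums of the two-point functions over $\mathcal A_1$ and over $\mathcal A_2$ could coincide at every edge even though each individual pair of states disagrees on a positive density of edges. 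Your proposed fix --- ``average over edges and use translation covariance'' --- does not address this, because the cancellation in question happens across \emph{states at a fixed edge}, not across edges; translation covariance only tells you that the law of the discrepancy is the same at every edge, which is perfectly consistent with that discrepancy being identically zero.

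The paper closes this gap by a sharper choice of conditioning that eliminates the averaging over states altogether. It first shows that the set of weights $\{p_\alpha\}$ is a translation-invariant function of $J$ and hence deterministic. Since countably infinitely many strictly positive weights summing to one cannot all be equal, there exist $\alpha,\alpha'$ with $p_\alpha\neq p_{\alpha'}$; these two distinct weights serve as measurable, covariant \emph{tags}, and one sets $\kappa^\alpha_J:=\delta_{\Gamma^\alpha_J}$ and $\kappa^{\alpha'}_J:=\delta_{\Gamma^{\alpha'}_J}$ (taking the weighted combination if several states share the same weight). Because $L_{J_B}$ and $T$ preserve the weights, these point-mass measures satisfy all three metastate axioms; and because they are point masses, $\kappa^\alpha_J(\Gamma(\sigma_x\sigma_y))=\Gamma^\alpha_J(\sigma_x\sigma_y)$ exactly, so incongruence of $\Gamma^\alpha$ and $\Gamma^{\alpha'}$ yields Assumption \ref{ass} directly: the positive edge density together with translation invariance gives a positive-probability discrepancy at any fixed edge, with no room for cancellation. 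This is also the true role of the countable-infiniteness hypothesis --- it guarantees two \emph{distinct} weights --- rather than the measurable-selection argument you sketch. To repair your proof you would need either to adopt this tagging construction or to supply a genuinely new argument ruling out the possibility that your two mixtures conspire to have identical edge correlations.
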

The important consequence of the countability hypothesis is the
existence of two distinct weights $p_\alpha\neq p_{\alpha'}$, which is
ensured in the countably infinite case.  The proof applies also to the
case where $\mathcal A$ is finite and the weights are not all equal.
Of course, the result is far from the conjecture that in $d=2$ there
exists a single Gibbs state.  Our hope is that this method of proof
using fluctuation bounds could be extended to further restrict the
possible structures of the set of Gibbs states in dimension two and
higher.

\section{Proof of Theorem \ref{thm: lower}}
\label{sec: lower}

We start by introducing some notation and reviewing some elementary facts.

For a finite $B\subset \Z^d$, we write $J_B:=(J_{xy}, (x,y)\in E(B))$
for the couplings on $E(B)$.  We write $\partial B$ for the set of
edges that have only one end in $B$.  We denote by $J_{B^c}$ the set
of couplings with at least one end outside $B$; that is,
$J_{B^c}=(J_{xy}, (x,y) \in E(B^c)\cup \partial B)$.  For a metastate
$\kappa_\cdot$ and a finite set $B$, we write $\kappa_{J_{B^c}}$ for
the metastate conditioned on the realization $J$ but with $J_B$ set to
zero.

For two metastates $\kappa_\cdot$ and $\kappa'_\cdot$, and a finite set $B\subset \Z^d$, we define the measure
\begin{equation}
\label{eq: M comp}
M_{B^c}:= \nu(dJ_{B^c}) \ \kappa_{J_{B^c}} \times \kappa'_{J_{B^c}}\ .
\end{equation}
We shall consider the measure $\nu(dJ_B) M_{B^c}$. 
The following Lemma proves some important invariance properties of the measure~$M$ that will be needed later.


\begin{lem}[Invariance properties of $M$]
\label{lem: M invariance}
Let $\kappa_\cdot$ and $\kappa'_\cdot$ be two metastates and $B$ a finite box in $\Z^d$. 
Consider the measures $M$ and $M_{B^c}$ given in \eqref{eq: M} and in \eqref{eq: M comp}. Then for any measurable function $F: \R^{E(\Z^d)}\times \mathcal M_1(\Sigma)\times \mathcal M_1(\Sigma)\to \R$ we have both
\begin{equation}
\text{$F(J,\Gamma,\Gamma')$ has the same law under $M$ as under $TM$ for any translation $T$}
\end{equation}
and
\begin{equation}
\text{ $F(J, L_{J_B}\Gamma, L_{J_{B}}\Gamma') $ has the same law under $\nu(dJ_B) M_{B^c}$ as $F(J,\Gamma,\Gamma')$ under $M$\ .}
\end{equation}
\end{lem}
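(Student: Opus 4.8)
The plan is to prove the two invariance statements separately, in each case reducing the claim about the joint measure $M$ to the covariance properties of the individual metastates recorded in Definition~\ref{df: metastate}. The key conceptual point is that $M$ is built as a product of the coupling measure $\nu$ with two independent copies of a metastate, so each transformation acts factor-by-factor, and the invariances of $\nu$ (translation invariance; invariance of Lebesgue-absolutely-continuous i.i.d.\ couplings under the shift $J_B\mapsto J_B+J_B^0$) combine multiplicatively with the metastate covariances.

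For the first (translation) statement, I would unwind the definition of the pushforward measure $TM$. The translation $T$ acts simultaneously on all three coordinates: it sends $J\mapsto TJ$ and $\Gamma\mapsto T\Gamma$ (and likewise $\Gamma'$). So for a test function I would write
\begin{equation}
\int F(J,\Gamma,\Gamma')\,d(TM) = \int F(TJ, T\Gamma, T\Gamma')\,dM,
\end{equation}
and then expand the right-hand side using $M=\nu(dJ)\,\kappa_J(d\Gamma')\times\kappa'_J(d\Gamma)$. The inner integration over $\Gamma,\Gamma'$ is governed by $\kappa_J\times\kappa'_J$; applying the Translation Covariance property $\kappa_{TJ}(A)=\kappa_J(T^{-1}A)$ (equivalently, that $T$ pushes $\kappa_J$ forward to $\kappa_{TJ}$) converts the integrand back to $F(TJ,\cdot,\cdot)$ sampled from $\kappa_{TJ}\times\kappa'_{TJ}$. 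Finally the change of variables $J\mapsto TJ$ in the outer integral leaves $\nu$ invariant because the couplings are i.i.d., producing $\int F\,dM$. The only care needed is to track that $T$ acts on the edge set consistently so that the relabelling of couplings matches the relabelling of states.

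For the second statement the mechanism is the coupling-covariance property, and here one must be careful about how the base measure is decomposed. The measure $\nu(dJ_B)M_{B^c}$ first samples $J_{B^c}$ together with states drawn from $\kappa_{J_{B^c}}\times\kappa'_{J_{B^c}}$ (the metastate with $J_B$ set to zero), and then samples $J_B$ from $\nu$ independently. The claim is that re-inserting the couplings $J_B$ via the local modification $L_{J_B}$ reconstructs exactly the law $M$. I would verify this by writing, for a test function,
\begin{equation}
\int F(J,L_{J_B}\Gamma, L_{J_B}\Gamma')\,\nu(dJ_B)\,dM_{B^c},
\end{equation}
expanding $M_{B^c}=\nu(dJ_{B^c})\,\kappa_{J_{B^c}}\times\kappa'_{J_{B^c}}$, and applying the Coupling Covariance property $\kappa_{J+J_B}(A)=\kappa_J(L_{J_B}^{-1}A)$ with base coupling $J_{B^c}$ (i.e.\ $J_B=0$) and increment $J_B$. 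This identity says precisely that the pushforward of $\kappa_{J_{B^c}}$ under $L_{J_B}$ equals $\kappa_{J_{B^c}+J_B}=\kappa_J$, so $L_{J_B}\Gamma$ sampled from $\kappa_{J_{B^c}}$ has the same law as $\Gamma$ sampled from $\kappa_J$. Combining this for both factors and recombining $\nu(dJ_B)\,\nu(dJ_{B^c})=\nu(dJ)$ (valid by independence of the couplings) yields $\int F\,dM$.

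I expect the main obstacle to be bookkeeping rather than any deep difficulty: one must align three layers of notation—the decomposition $J=(J_B,J_{B^c})$, the conditioned metastate $\kappa_{J_{B^c}}$, and the local modification operator $L_{J_B}$—and check that the coupling-covariance identity is being applied with the correct base point and increment. In particular I would be careful that $\kappa_{J_{B^c}}$ is indeed $\kappa$ evaluated at the coupling configuration with $J_B=0$, so that adding back $J_B$ recovers the full $J$; and that the Fubini/independence step $\nu(dJ)=\nu(dJ_B)\,\nu(dJ_{B^c})$ is legitimate, which it is because the couplings are i.i.d.\ and $E(B)$, $E(B^c)\cup\partial B$ partition the edges. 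No measurability issues arise beyond those already built into Definition~\ref{df: metastate}, since $L_{J_B}$ and $T$ are measurable maps on $\mathcal M_1(\Sigma)$.
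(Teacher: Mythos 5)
Your proposal is correct and follows the same route as the paper, which itself only sketches the argument: both invariances are read off factor-by-factor from the translation and coupling covariance of the metastates together with the invariance and product structure of $\nu$. Your version simply makes explicit the bookkeeping (base point $J_{B^c}$ with $J_B=0$, increment $J_B$, and the factorization $\nu(dJ)=\nu(dJ_B)\,\nu(dJ_{B^c})$) that the paper leaves implicit.
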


\bigskip

\begin{proof}
The coupling covariance of the metastates implies that, for any Borel~measurable function  $F: \R^{E(\Z^d)}\times \mathcal M_1(\Sigma)\times \mathcal M_1(\Sigma)\to \R$,
$F(J, L_{J_B}\Gamma, L_{J_{B}}\Gamma') $ has the same law under $\nu(dJ_B) M_{B^c}$ as $F(J,\Gamma,\Gamma')$ under $M$.
Furthermore, it is easily checked that translation covariance of the metastate, together with the fact that the distribution on the couplings is itself invariant under translation, gives translation-invariance of the measure: if $T$ is a translation in $\Z^d$ and $F: \R^{E(\Z^d)}\times \mathcal M_1(\Sigma)\times \mathcal M_1(\Sigma)\to \R$ is some measurable function, then
$TM(F):=M(T^{-1}F)=M(F)$. Here, $M(F)=\int F\ dM$ is the expectation of $F$ under $M$.  Note that by translation invariance of $M$, if Assumption \ref{ass} holds for one edge, it holds for all edges. 
\end{proof}

\bigskip

The next lemma states, without a proof, two elementary properties of the variance that will be needed.
\begin{lem}
For $X$ a random variable on $(\Omega,\F,\prob)$ and a sub $\sigma$-algebra $\mathcal G\subset \mathcal F$, 
\begin{equation}
\label{eq: var}
\text{ Var}X = \E[\text{Var}(X| \mathcal G)] + \text{Var}(\E[X|\mathcal G])\ .
\end{equation}
Also, if $X$ and $X'$ are two independent copies of $X$, then
\begin{equation}
\label{eq: sym var}
\text{Var}X= \frac{1}{2} \E[(X-X')^2]\ .
\end{equation}
\end{lem}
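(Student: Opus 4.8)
The plan is to establish both displayed identities by direct expansion of second moments, relying only on the definition of variance and the tower property of conditional expectation; neither statement uses any structure specific to the model, so I would treat them as standalone facts about a square-integrable random variable $X$.

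For the law of total variance \eqref{eq: var}, I would begin from $\text{Var}(X) = \E[X^2] - (\E X)^2$ and evaluate the two terms on the right of \eqref{eq: var} separately. Setting $Y := \E[X \mid \mathcal{G}]$, the conditional variance is $\text{Var}(X \mid \mathcal{G}) = \E[X^2 \mid \mathcal{G}] - Y^2$; taking expectations and using the tower property $\E[\E[X^2 \mid \mathcal{G}]] = \E[X^2]$ gives $\E[\text{Var}(X \mid \mathcal{G})] = \E[X^2] - \E[Y^2]$. Separately, $\text{Var}(Y) = \E[Y^2] - (\E Y)^2 = \E[Y^2] - (\E X)^2$, again because $\E Y = \E X$. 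Summing, the $\E[Y^2]$ contributions cancel and one is left with $\E[X^2] - (\E X)^2 = \text{Var}(X)$, which is \eqref{eq: var}.

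For the symmetrization identity \eqref{eq: sym var}, I would expand $\E[(X - X')^2] = \E[X^2] - 2\E[XX'] + \E[(X')^2]$. Because $X'$ is an independent copy of $X$, I would substitute $\E[(X')^2] = \E[X^2]$ and use independence to factor $\E[XX'] = (\E X)(\E X') = (\E X)^2$. The result is $\E[(X - X')^2] = 2\bigl(\E[X^2] - (\E X)^2\bigr) = 2\,\text{Var}(X)$, and dividing by two gives \eqref{eq: sym var}.

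Since these are textbook identities, there is no substantive obstacle; the only point deserving attention is integrability. I would remark that the moment assumption $\int \nu(dJ_{xy})\,J_{xy}^4 < \infty$, together with the uniform boundedness of the correlation functions entering $F_\Lambda$, ensures that whichever random variable $X$ these identities are applied to later is square-integrable. This makes $\E[X^2]$, $\E[XX']$, and the conditional second moments finite, so that all the cancellations above are rigorously justified.
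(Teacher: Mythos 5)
Your proof is correct: both the law of total variance and the symmetrization identity are established by the standard second-moment expansions, and your integrability remark is apt. Note that the paper deliberately states this lemma without proof, treating both identities as elementary textbook facts, so your argument simply supplies the routine verification the authors omitted.
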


The crucial ingredients of the proof of Theorem \ref{thm: lower} are the next two lemmas. 
Recalling the definition of $M_{B^c}$ in~\eqref{eq: M comp}:
\begin{lem}
\label{lem: cond}
Consider a measurable function $F: \R^{E(\Z^d)}\times \mathcal M_1(\Sigma)\times \mathcal M_1(\Sigma)\to \R$ and a finite subset $B$ of $\Z^d$.
Then
\begin{equation}
M\big( F(J,\Gamma,\Gamma') | J_B\big)= M_{B^c}\big(F(J, L_{J_B} \Gamma,  L_{J_B} \Gamma'\big)\ \text{ $\nu$-a.s.}
\end{equation}
where $M(\cdot|J_B)$ is the conditional expectation given $J_B$. 
\end{lem}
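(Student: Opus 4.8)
The plan is to obtain the identity as a conditional refinement of the invariance already recorded in Lemma~\ref{lem: M invariance}. The underlying engine is the coupling covariance of the metastate. Since the edges $E(B)$, $\partial B$, and $E(B^c)$ partition $E(\Z^d)$, I write the full realization as $J = J_{B^c} + J_B$, with $J_B$ supported on $E(B)$ and $J_{B^c}$ on the complementary edges. Property~(2) of Definition~\ref{df: metastate}, applied with base configuration $J_{B^c}$ (i.e.\ $J_B$ set to zero), reads $\kappa_J(A) = \kappa_{J_{B^c}}(L_{J_B}^{-1}A)$, which is exactly the statement that $\kappa_J$ is the pushforward of $\kappa_{J_{B^c}}$ under the map $\Gamma \mapsto L_{J_B}\Gamma$; the same holds for $\kappa'$. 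Because the couplings are i.i.d., $\nu(dJ)$ factorizes as $\nu(dJ_B)\,\nu(dJ_{B^c})$, so conditioning on $J_B$ under $M$ amounts to fixing $J_B$ and integrating over $J_{B^c}$ together with the two sampled states.

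Concretely, I would first fix $J_B$ and evaluate the inner state-integral at a fixed $J$. Applying the pushforward formula to each of the two independent samples gives
\[
\int \kappa_J(d\Gamma')\,\kappa'_J(d\Gamma)\; F(J,\Gamma,\Gamma')
= \int \kappa_{J_{B^c}}(d\Gamma')\,\kappa'_{J_{B^c}}(d\Gamma)\; F(J, L_{J_B}\Gamma, L_{J_B}\Gamma').
\]
Integrating both sides against $\nu(dJ_{B^c})$ with $J_B$ held fixed, the left-hand side is by definition the conditional expectation $M\big(F(J,\Gamma,\Gamma')\mid J_B\big)$, while the right-hand side is, by the definition~\eqref{eq: M comp} of $M_{B^c}$, exactly $M_{B^c}\big(F(J, L_{J_B}\Gamma, L_{J_B}\Gamma')\big)$. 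Here the frozen value $J_B$ enters both the argument $J$ of $F$ and the operator $L_{J_B}$ as a parameter, while $M_{B^c}$ integrates only over $J_{B^c}$ and the states. This yields the claimed identity $\nu$-a.s.

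Equivalently, and perhaps more transparently, I would deduce the statement directly from Lemma~\ref{lem: M invariance}, which asserts equality in law of $F(J,\Gamma,\Gamma')$ under $M$ and of $F(J,L_{J_B}\Gamma,L_{J_B}\Gamma')$ under $\nu(dJ_B)\,M_{B^c}$. The crucial point is that the transformation $(J,\Gamma,\Gamma')\mapsto(J,L_{J_B}\Gamma,L_{J_B}\Gamma')$ leaves the coupling configuration $J$—and in particular $J_B$—untouched, acting only on the states. Consequently the equality in law descends to an equality of the two laws conditioned on the common coordinate $J_B$. Since $M_{B^c}$ does not involve $J_B$ at all, conditioning $\nu(dJ_B)\,M_{B^c}$ on $J_B$ simply returns $M_{B^c}$ with $J_B$ as a fixed parameter, and the result follows.

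The only genuine care needed—the step I expect to be the main, if modest, obstacle—is the bookkeeping that conditioning commutes with the transformation. This rests on two facts: that $L_{J_B}$ modifies the states but not the couplings, so $J_B$ is a transformation-invariant coordinate shared by both sides; and that the map $(J,\Gamma)\mapsto L_{J_B}\Gamma$ is jointly measurable, which is needed to justify both the change of variables and the conditional disintegration and which follows from the explicit formula~\eqref{eq: gamma L}. Once these are in place the identity is immediate, and no estimate or limiting argument is required.
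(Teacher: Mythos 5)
Your proof is correct and rests on the same engine as the paper's: the coupling covariance of the metastates (equivalently Lemma~\ref{lem: M invariance}) combined with the product structure $\nu(dJ)=\nu(dJ_B)\,\nu(dJ_{B^c})$. The paper simply phrases the identification of the conditional expectation by testing against bounded functions $g(J_B)$ rather than by disintegration, so the two arguments are essentially identical.
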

\begin{proof}
Let $g(J_B)$ be some bounded measurable function of $J_B$. We need to show that
\begin{equation}
M\big(F(J,\Gamma, \Gamma') g(J_B)\big)= \int \nu(dJ_B) \ M_{B^c}\big(F(J, L_{J_B} \Gamma,  L_{J_B} \Gamma')\big) g(J_B)\ .
\end{equation}
But  by Lemma~\ref{lem: M invariance} the left side equals 
\begin{equation}
\int \nu(dJ_B)\int \nu(dJ_{B^c}) \ \kappa_{J_{B^c}}\times \kappa_{J_{B^c}}\big( F(J, L_{J_B} \Gamma,  L_{J_B}  \Gamma') g(J_B)\big)
\end{equation}
which is, by definition and by taking $g(J_B)$ inside the first expectation,
\begin{equation}
\int \nu(dJ_B) \ g(J_B) M_{B^c}\big(F(J, L_{J_B} \Gamma,  L_{J_B} \Gamma') \big) \ ,
\end{equation}
as claimed.
\end{proof} 
One consequence of Lemma~\ref{lem: cond} is to provide a smooth version of the conditional expectation of the free energy difference given $J_B$. 
In particular, we can compute its derivatives.
\begin{lem}
\label{lem: deriv}
Let $F_\Lambda$ be as in \eqref{eq: F} and $B\subset \Lambda$ finite. Then for any edge $(x,y)$ in $B$, we have
\begin{equation}
\frac{\partial}{\partial J_{xy}} M\big(F_\Lambda | J_B\big)= \beta \ M_{B^c} \big( L_{J_B}\Gamma(\sigma_x\sigma_y)-  L_{J_B}\Gamma'(\sigma_x\sigma_y)\big) \text{ $\nu$-a.s.}
\end{equation}
\end{lem}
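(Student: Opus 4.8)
The plan is to reduce the claim to a pointwise differentiation of the explicit integrand furnished by Lemma~\ref{lem: cond}, followed by an interchange of the derivative with the expectation. By Lemma~\ref{lem: cond}, a version of the conditional expectation is given by $M(F_\Lambda\mid J_B)=M_{B^c}\big(F_\Lambda(J,L_{J_B}\Gamma,L_{J_B}\Gamma')\big)$. The feature to exploit is that the measure $M_{B^c}$ in~\eqref{eq: M comp} carries no dependence on $J_B$ whatsoever: it integrates $J_{B^c}$ against $\nu$ and samples $(\Gamma,\Gamma')$ from $\kappa_{J_{B^c}}\times\kappa'_{J_{B^c}}$, whose couplings on $E(B)$ are set to zero. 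In particular $\Gamma$ and $\Gamma'$ do not depend on $J_{xy}$, so the entire $J_{xy}$-dependence of the right-hand side is carried by the explicit integrand. First I would fix $(\Gamma,\Gamma',J_{B^c})$ and compute $\frac{\partial}{\partial J_{xy}}F_\Lambda(J,L_{J_B}\Gamma,L_{J_B}\Gamma')$ by hand.

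For this pointwise derivative I would substitute the definition~\eqref{eq: gamma L} of $L_{J_B}$ into~\eqref{eq: F}, so that the argument of the logarithm is the ratio of $(L_{J_B}\Gamma)(\exp\beta H_{\Lambda,J})=\Gamma\big(\exp\beta(H_{\Lambda,J}-H_{B,J})\big)\big/\Gamma\big(\exp(-\beta H_{B,J})\big)$ to the same expression with $\Gamma$ replaced by $\Gamma'$. The crucial structural point is that, since $B\subset\Lambda$, the difference $H_{\Lambda,J}-H_{B,J}$ involves only the couplings on $E(\Lambda)\setminus E(B)$ by~\eqref{eq: EA}; hence the numerators $\Gamma\big(\exp\beta(H_{\Lambda,J}-H_{B,J})\big)$ are \emph{independent} of $J_{xy}$ for $(x,y)\in E(B)$, the internal-block contributions having cancelled. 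The whole $J_{xy}$-dependence therefore resides in the normalizations $\Gamma(\exp(-\beta H_{B,J}))$ and $\Gamma'(\exp(-\beta H_{B,J}))$, whose logarithmic derivatives are $\beta\,L_{J_B}\Gamma(\sigma_x\sigma_y)$ and $\beta\,L_{J_B}\Gamma'(\sigma_x\sigma_y)$ by~\eqref{eq:Gamma} and~\eqref{eq: gamma L}. Collecting the two contributions gives $\beta\big[L_{J_B}\Gamma(\sigma_x\sigma_y)-L_{J_B}\Gamma'(\sigma_x\sigma_y)\big]$, with the overall sign fixed by the conventions in~\eqref{eq: EA}--\eqref{eq: F}.

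It then remains to move $\frac{\partial}{\partial J_{xy}}$ through the expectation $M_{B^c}(\cdot)$. For this I would invoke the standard criterion for differentiating under the integral sign. Two things make it immediate here: $M_{B^c}$ is a probability measure independent of $J_{xy}$, so no Radon--Nikodym factor is produced; and the pointwise derivative just computed is bounded in absolute value by $2\beta$, uniformly in $(\Gamma,\Gamma',J_{B^c})$ and locally uniformly in $J_{xy}$, simply because $|\sigma_x\sigma_y|\le1$ forces $|L_{J_B}\Gamma(\sigma_x\sigma_y)|\le1$. This uniform bound is the dominating function required for dominated convergence, so the interchange is legitimate and the asserted identity holds $\nu$-a.s.

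I expect the one genuinely delicate point to be conceptual rather than computational. The conditional expectation $M(F_\Lambda\mid J_B)$ is a priori defined only up to a $\nu$-null set of values of $J_B$, so speaking of ``its derivative in $J_{xy}$'' requires first selecting a regular version. Lemma~\ref{lem: cond} is precisely what resolves this, since it identifies $M(F_\Lambda\mid J_B)$ with the concrete map $J_B\mapsto M_{B^c}\big(F_\Lambda(J,L_{J_B}\Gamma,L_{J_B}\Gamma')\big)$, which is smooth in $J_B$ because $M_{B^c}$ is $J_B$-free and the integrand is real-analytic in $J_{xy}$ (the relevant expectations being finite and strictly positive as $\Lambda$ is finite). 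Differentiating this canonical version, rather than an arbitrary representative, is what makes the statement well posed; the bookkeeping of the cancellation and the domination is then routine.
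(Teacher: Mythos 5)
Your proposal is correct and follows essentially the same route as the paper: apply Lemma~\ref{lem: cond} to obtain a smooth version of the conditional expectation, observe that after inserting the definition of $L_{J_B}$ the numerator $\Gamma\big(\exp\beta(H_{\Lambda,J}-H_{B,J})\big)$ is free of $J_{xy}$ so only the normalizations contribute, and pass the derivative through $M_{B^c}$ by dominated convergence using the uniform bound $|L_{J_B}\Gamma(\sigma_x\sigma_y)|\le 1$. Your extra remarks on selecting the regular version of the conditional expectation and on the overall sign (which in the paper's conventions actually comes out as $-\beta$ times the stated difference, an immaterial slip also present in the paper) are consistent with, and slightly more careful than, the published argument.
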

\begin{proof}
By Lemma \ref{lem: cond}, we need to compute
\begin{equation}
\frac{\partial}{\partial J_{xy}}M_{B^c}\Bigl(F(J, L_{J_B} \Gamma,  L_{J_B} \Gamma)'\Bigr)
= 
\frac{\partial}{\partial J_{xy}}M_{B^c}\Biggl( \log  L_{J_B} \Gamma\Bigl(\exp\beta H_{J,\Lambda}(\sigma)\Bigr) -\log  L_{J_B} \Gamma'\Bigl(\exp\beta H_{J,\Lambda}(\sigma')\Bigr)  \Biggr)\ .
\end{equation}
By \eqref{eq: gamma L}, the first  logarithmic term on the right side is simply
\begin{equation}
\log  L_{J_B} \Gamma(\exp\beta H_{J,\Lambda}(\sigma))= \log \frac{\Gamma\big(\exp\beta H_{J,\Lambda}(\sigma)\exp -\beta H_{J,B}(\sigma)\big)}{\Gamma\big(\exp -\beta H_{J,B}(\sigma)\big)}
\end{equation}
Note that the numerator no longer depends on $J_B$ since its contribution is cancelled. Therefore,
\begin{equation}
\frac{\partial}{\partial J_{xy}} \log  L_{J_B} \Gamma(\exp\beta H_{J,\Lambda}(\sigma))= 
\beta \frac{\Gamma\big(\sigma_x\sigma_y \ \exp -\beta H_{J,B}(\sigma)\big)}{\Gamma\big(\exp -\beta H_{J,B}(\sigma)\big)}= L_{J_B}\Gamma(\sigma_x\sigma_y)\ .
\end{equation}
It remains to prove that the derivative can be passed through $M_{B^c}$. This follows from dominated convergence by noticing that the derivatives are uniformly bounded, completing the proof.

\end{proof}

\medskip

With these lemmas in hand, we can now proceed to the proof of the main result of the paper.

\begin{proof}[Proof of Theorem~\ref{thm: lower}]  Let $\F_\Lambda=\sigma\bigl(J_{x,y}, (x,y) \in E(\Lambda)\bigr)$.  By \eqref{eq: var}, we have
\begin{equation}
\text{Var}_M( F_\Lambda) \geq  \text{Var}_M \Bigl( M(F_\Lambda| \F_\Lambda)\Bigr)\, .
\end{equation}
Next divide $\Lambda$ in equally sized blocks $B_1,\cdots, B_N$, with $1\ll |B_k|\ll|\Lambda|$. The size of the blocks is independent of $\Lambda$, so that
\begin{equation}
N= C|\Lambda|
\end{equation}
for some constant $C>0$.
The size of each $B$ is sufficiently large so that
\begin{equation}
\label{eq: key}
M\Bigl(\Gamma(\sigma_x\sigma_y)- \Gamma'(\sigma_x\sigma_y)  | J_B\Bigr)\neq 0 \text{ with positive $\nu$-probability.}
\end{equation}
Indeed, we get by taking $B\to \Z^d$ that
\begin{equation}
M\Bigl(\Gamma(\sigma_x\sigma_y)- \Gamma'(\sigma_x\sigma_y)  | J_B\Bigr)\to \kappa_J \Bigl(\Gamma(\sigma_x\sigma_y)\Bigr)- \kappa'_J \Bigl(\Gamma(\sigma_x\sigma_y)\Bigr)\, ,
\end{equation}
where we used the Martingale Convergence Theorem for uniformly integrable martingales (see, for example, Theorem 5.5.7 in \cite{Durrett91}).
Thus, there would be a contradiction with Assumption~\ref{ass} if there were a sequence of volumes $B\to \Z^d$ such that \eqref{eq: key} is zero 
almost surely along the sequence.

We now consider $\F_k=\sigma(J_{B_i}, i\leq k)$ and the martingale difference $M(F_\Lambda| \F_k)- M( F_\Lambda| \F_{k-1})$.
This yields the lower bound
\begin{equation}
\label{eq:martdiff}
\text{Var}_M ( F_\Lambda)
\geq \sum_{k=1}^N \text{Var}_M\Bigl(M( F_\Lambda| \F_k) - M( F_\Lambda| \F_{k-1})\Bigr)\, .
\end{equation}
To prove this inequality we use~\eqref{eq: var}, in addition dropping the couplings between boxes.
Again using~\eqref{eq: var}, this time averaging the couplings {\it outside\/} a given $B_k$, gives
\begin{equation}
\text{Var}_M ( F_\Lambda)\geq \sum_{k=1}^N \text{Var}_M\Bigl(M( F_\Lambda| J_{B_k})\Bigr)\, .
\end{equation}

The remainder of the proof focuses on establishing the following two claims:
\begin{enumerate}
\item $\text{Var}_M\Bigl(M( F_\Lambda| J_{B_k})\Bigr)=\text{Var}_M\Bigl(M(F_\Lambda| J_{B_1})\Bigr)$, so the variance does not depend on $k$;
\item $\text{Var}_M\Bigl(M( F_\Lambda| J_{B_1})\Bigr)>c'$ for $c'>0$ independent of $\Lambda$.
\end{enumerate}
Once both claims are established, it follows that $\text{Var}_M ( F_\Lambda)\geq c' N= c' C|\Lambda|$, thereby concluding the proof.

\medskip

To prove the first claim, let $B$ be a generic block. Writing the variance using \eqref{eq: sym var} gives
\begin{equation}
\text{Var}_M\Bigl( M(F_\Lambda | J_B)\Bigr)
= \frac{1}{2}\int \nu(z''_B)\int \nu(z'_B) \Bigl\{M(F_\Lambda | J_B=z_B')-M(F_\Lambda | J_B=z_B'')\Bigr\}^2
\end{equation}
By Lemma \ref{lem: deriv}, $M(F_\Lambda| J_B)$ is differentiable with respect to $J_{xy}$ for any edge $(x,y)$ in $B$.  
In particular, the gradient $\nabla_B M(F_\Lambda | J_B)$ exists almost surely and we can write 
\begin{equation}
M(F_\Lambda | J_B=z_B')-M(F_\Lambda | J_B=z_B'')=\int_{z_B'\to z_B''} \nabla_B M(F_\Lambda | J_B=z_B)\cdot dz_B\ \text{ $\nu$-a.s.}
\end{equation}

For an edge $(x,y)$, define the function
\begin{equation}
\delta_{xy}(\Gamma, \Gamma')=\Gamma(\sigma_x\sigma_y)-\Gamma'(\sigma'_x\sigma'_y)\ .
\end{equation}
By Lemma~\ref{lem: deriv}, we have
\begin{equation}
\frac{\partial}{\partial J_{xy}} M(F_\Lambda | J_B)= \beta  M_{B^c}\bigl(\delta_{xy}(L_{J_B}\Gamma, L_{J_B}\Gamma')\bigr)\ .
\end{equation}
Lemma~\ref{lem: cond} applied to the function $\delta_{xy}$ implies that 
\begin{equation}
M_{B^c}\bigl(\delta_{xy}(L_{J_B}\Gamma, L_{J_B}\Gamma')\bigr)= M(\delta_{xy}(\Gamma, \Gamma')| J_B)\ \text{ $\nu$-a.s.}
\end{equation}
Writing $M\bigl(\delta_{B}(\Gamma, \Gamma')| J_B\bigr):= \bigl(M(\delta_{xy}(\Gamma, \Gamma')| J_B\bigr), (x,y)\in E(B))$, we have
\begin{equation}
\text{Var}_M\Bigl(M(F_\Lambda | J_B)\Bigr)
= \frac{\beta^2}{2}\int \nu(z''_B)\int \nu(z'_B) \Bigl\{  \int_{z_B'\to z_B''} M(\delta_{B}(\Gamma, \Gamma')| J_B=z_B)\cdot dz_B \Bigr\}^2
\end{equation}
For a translation $T$ such that $TB \subset \Lambda$, consider $\text{Var}_M\bigl( M(F_\Lambda | J_{TB})\bigr)$.  Note that
\begin{equation}
\delta_{TxTy}(\Gamma, \Gamma')=\Gamma(\sigma_{Tx}\sigma_{Ty})-\Gamma'(\sigma'_{Tx}\sigma'_{Ty})= 
T\Gamma(\sigma_x\sigma_y)-T\Gamma'(\sigma'_x\sigma'_y)\ .
\end{equation}
In particular, $\delta_{TxTy}(\Gamma, \Gamma')$ has the same law under $M$ as $\delta_{xy}(\Gamma, \Gamma')$ by Lemma~\ref{lem: M invariance}.
We conclude that
\begin{equation}
\text{Var}_M\Bigl(M(F_\Lambda | J_{TB})\Bigr)=\text{Var}_M\Bigl( M(F_\Lambda | J_B)\Bigr)\ .
\end{equation}
This proves Claim~1. 

\medskip

To prove Claim~2 we observe that, since the gradient does not depend on $\Lambda$, we simply need to show that
\begin{equation}
\int \nu(z''_B)\int \nu(z'_B) \Bigl\{  \int_{z_B'\to z_B''} M(\delta_{B}(\Gamma, \Gamma')| J_B=z_B)\cdot dz_B \Bigr\}^2>0\ .
\end{equation}

Suppose the contrary. Then 
\begin{equation}
\int_{z_B'\to z_B''} M(\delta_{B}(\Gamma, \Gamma')| J_B=z_B)\cdot dz_B=0 \ \ \text{ for $\nu$-almost all $(z_B',z_B'')$. }
\end{equation}
In particular, since $\nu$ is continuous, this implies
\begin{equation}
M\Bigl(\delta_{B}(\Gamma, \Gamma')| J_B\Bigr)=0 \ \text{ $\nu$-a.s.}
\end{equation}
In other words, for any edge $(x,y)\in E(B)$,
\begin{equation}
M\Bigl(\Gamma(\sigma_x\sigma_y)- \Gamma'(\sigma_x\sigma_y)  | J_B\Bigr) =0 \ \text{ $\nu$-a.s.}
\end{equation}
which contradicts~(\ref{eq: key}).

\end{proof}

\section{Proof of the results in $d=2$}
\label{sec: d=2}

The proof of Corollary~\ref{cor: 2d} proceeds by contradiction, studying the fluctuations of $M( F_\Lambda| J_\Lambda)$ 
in the spirit of~\cite{AW90}.

The outline of the proof is as follows.
We always have by Proposition \ref{prop: upper} below that
\begin{equation}
|M( F_\Lambda| J_\Lambda)|\leq 4\beta \sum_{e\in \partial \Lambda} \nu(|J_e|)=4\beta |\partial \Lambda| \nu(|J_e|)\ .
\end{equation}
In particular, for all $t>0$
\begin{equation}
\label{eq: upper thm}
\nu\Bigl(\exp t\frac{M( F_\Lambda| J_\Lambda)}{|\partial \Lambda|} \Bigr) \leq e^{4\beta t} \ .
\end{equation}
To get a contradiction, we would like to show that, under Assumption~\ref{ass}, the distribution of 
\begin{equation}
\frac{M( F_\Lambda| J_\Lambda)}{|\partial \Lambda|}
\end{equation}
has at least a Gaussian tail, in the sense that there exists $c>0$ such that 
\begin{equation}
\label{eq: lower thm}
\nu\Bigl(\exp t\frac{M( F_\Lambda| J_\Lambda)}{|\partial \Lambda|}\Bigr) \geq e^{c t^2}\ .
\end{equation}
This is possible in $d=2$ where, unlike in higher dimensions,
$|\partial \Lambda|$ is of the order of the square root of
$|E(\Lambda)|$.  This is not true in higher dimensions.
Eqs.~\eqref{eq: upper thm} and~\eqref{eq: lower thm} are in obvious
contradiction for $t$ large enough.

\subsection{Upper bound for the difference of free energies}
\begin{prop}
\label{prop: upper}
For $F_\Lambda$ defined in \eqref{eq: F}, we have
\begin{equation}
\label{eq: upper}
\Bigl| F_\Lambda(J,\Gamma,\Gamma')\Bigr| \leq 4\beta \sum_{e\in \partial \Lambda} |J_e|\ , \text{ $M$-a.s.}
\end{equation}
In particular, 
\begin{equation}
\limsup_{\Lambda\to \Z^d}  \frac{1}{|\partial \Lambda|} \Bigl|F_\Lambda(J,\Gamma,\Gamma')\Bigr| \leq 4\beta \ \nu(|J_e|) \qquad \text{ $M$-a.s.}
\end{equation}
\end{prop}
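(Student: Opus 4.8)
The plan is to prove the pointwise inequality \eqref{eq: upper} by establishing a stronger fact: for \emph{every} Gibbs state $\Gamma\in\mathcal G_J$, the quantity $\log\Gamma(\exp\beta H_{\Lambda,J})$ differs from one reference value depending only on the couplings inside $\Lambda$ by at most $2\beta\sum_{e\in\partial\Lambda}|J_e|$. Since $F_\Lambda$ is a difference of two such logarithms evaluated at $\Gamma$ and $\Gamma'$, the common reference cancels and the triangle inequality yields the factor $4\beta$ in \eqref{eq: upper}. The second (almost-sure) statement will then follow by dividing by $|\partial\Lambda|$ and applying a law of large numbers to the boundary couplings.

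First I would use the DLR equations for $\Gamma$ to rewrite the expectation as an average over the exterior configuration. Writing $\widetilde H_\Lambda$ for the energy of all edges with at least one endpoint in $\Lambda$, so that $\widetilde H_\Lambda=H_{\Lambda,J}+H_{\partial\Lambda,J}$ with $H_{\partial\Lambda,J}:=-\sum_{(x,y)\in\partial\Lambda}J_{xy}\sigma_x\sigma_y$, the DLR property gives
\[
\Gamma\bigl(e^{\beta H_{\Lambda,J}}\bigr)=\int\Gamma(d\sigma)\,\frac{\sum_{\tau_\Lambda}e^{-\beta\widetilde H_\Lambda(\tau_\Lambda,\sigma_{\Lambda^c})}\,e^{\beta H_{\Lambda,J}(\tau_\Lambda)}}{\sum_{\tau_\Lambda}e^{-\beta\widetilde H_\Lambda(\tau_\Lambda,\sigma_{\Lambda^c})}}\,.
\]
The crucial algebraic simplification is that the interior interactions cancel in the numerator, $e^{-\beta\widetilde H_\Lambda}e^{\beta H_{\Lambda,J}}=e^{-\beta H_{\partial\Lambda,J}}$, leaving only the boundary energy. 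This is the precise sense in which the numerator becomes a Boltzmann factor that no longer feels the couplings inside $\Lambda$.

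The key step is then a two-sided bound on the resulting ratio that is uniform in $\sigma_{\Lambda^c}$. Since $|H_{\partial\Lambda,J}|\le S:=\sum_{e\in\partial\Lambda}|J_e|$ pointwise, the numerator $\sum_{\tau_\Lambda}e^{-\beta H_{\partial\Lambda,J}}$ lies in $[e^{-\beta S},e^{\beta S}]\,2^{|\Lambda|}$, while the denominator $\sum_{\tau_\Lambda}e^{-\beta H_{\Lambda,J}}e^{-\beta H_{\partial\Lambda,J}}$ lies in $[e^{-\beta S},e^{\beta S}]\,Z^0_\Lambda$, where $Z^0_\Lambda:=\sum_{\tau_\Lambda}e^{-\beta H_{\Lambda,J}(\tau_\Lambda)}$ depends only on $J_{E(\Lambda)}$. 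Hence the ratio lies in $[e^{-2\beta S},e^{2\beta S}]\,2^{|\Lambda|}/Z^0_\Lambda$ for every $\sigma_{\Lambda^c}$; integrating against the probability measure $\Gamma(d\sigma)$ preserves these bounds, so $\bigl|\log\Gamma(e^{\beta H_{\Lambda,J}})-\log(2^{|\Lambda|}/Z^0_\Lambda)\bigr|\le 2\beta S$. The same estimate holds for $\Gamma'$ with the \emph{same} $\Gamma$-independent reference $\log(2^{|\Lambda|}/Z^0_\Lambda)$, and subtracting gives $|F_\Lambda|\le 4\beta S$, which is \eqref{eq: upper}.

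For the $\limsup$ statement I would divide by $|\partial\Lambda|$ and apply a law of large numbers to the empirical average $|\partial\Lambda|^{-1}\sum_{e\in\partial\Lambda}|J_e|$. The subtlety, and the place I expect to be the main obstacle, is that for the nested boxes $\Lambda=[-L,L]^d$ the boundary edge sets $\partial\Lambda$ are pairwise \emph{disjoint} rather than increasing, so the classical strong law for partial sums does not apply directly. Here I would invoke the finite fourth moment hypothesis on $\nu$: a fourth-order Markov inequality gives $\nu\bigl(\,\bigl||\partial\Lambda|^{-1}\sum_{e\in\partial\Lambda}(|J_e|-\nu(|J_e|))\bigr|>\e\,\bigr)=O(|\partial\Lambda|^{-2})$, and since $|\partial\Lambda|\sim cL^{d-1}$ with $d\ge2$ these probabilities are summable in $L$. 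Borel--Cantelli then yields almost-sure convergence of the boundary average to $\nu(|J_e|)$, and combining this with the deterministic bound $|F_\Lambda|\le 4\beta|\partial\Lambda|\cdot|\partial\Lambda|^{-1}\sum_{e\in\partial\Lambda}|J_e|$ gives the claimed bound on $\limsup_{\Lambda\to\Z^d}|\partial\Lambda|^{-1}|F_\Lambda|$. The deterministic inequality \eqref{eq: upper} itself is routine once the DLR cancellation is in hand; it is this convergence over disjoint boundary shells that genuinely uses the moment assumption.
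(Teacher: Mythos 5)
Your proof is correct and follows essentially the same route as the paper: you use the DLR equations to cancel the interior energy and bound the boundary Boltzmann factor uniformly by $e^{\pm\beta\sum_{e\in\partial\Lambda}|J_e|}$, your reference value $2^{|\Lambda|}/Z^0_\Lambda$ being exactly the paper's $G_{\beta,\Lambda,J}(\exp\beta H_{\Lambda,J})$ from its Lemma \ref{lem: upper}. Your only addition is to spell out the almost-sure convergence over the disjoint boundary shells via a fourth-moment bound and Borel--Cantelli, where the paper simply cites the strong law of large numbers; that elaboration is sound and correctly identifies where the moment hypothesis enters.
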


Write $G_{\beta, \Lambda, J}$ for the Gibbs measure in a finite box $\Lambda$. (Recall that the Hamiltonian $H_{\Lambda,J}$ does not include interaction terms between $\Lambda$ and the boundary $\partial \Lambda$). 

\begin{equation}
G_{\beta, \Lambda, J}\big(f(\sigma_\Lambda)\big)= 
 \frac{\sum_{\sigma_\Lambda \in \{-1,+1\}^\Lambda} f(\sigma_\Lambda) \exp(-\beta H_{\Lambda,J}(\sigma))}{\sum_{\sigma_\Lambda \in \{-1,+1\}^\Lambda} \exp(-\beta H_{\Lambda,J}(\sigma))}
\end{equation}
To prove the Proposition, we need the following lemma.
\begin{lem}
\label{lem: upper}
Let $\Gamma\in \mathcal G_{\beta,J}$ be a Gibbs state and $\Lambda$ a finite box of $\Z^d$.
For $f: \{-1,1\}^\Lambda\to [0,\infty)$ a positive measurable function, we have
\begin{equation}
\exp\Bigl(-2\beta \sum_{e\in \partial \Lambda} |J_e|\Bigr)\leq
\frac{\Gamma(f(\sigma_\Lambda))}{G_{\beta, \Lambda, J}\big(f(\sigma_\Lambda)\big)}
\leq 
\exp\Bigl(2\beta \sum_{e\in \partial \Lambda} |J_e|\Bigr)
\end{equation}
\end{lem}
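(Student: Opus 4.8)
The plan is to invoke the DLR equations, which by definition characterize the infinite-volume Gibbs state $\Gamma$ through its conditional distributions on finite boxes. Writing $\eta$ for the spin configuration on $\Lambda^c$ and using that $f$ depends only on the spins $\sigma_\Lambda$ inside $\Lambda$, the DLR equations give
\[
\Gamma(f(\sigma_\Lambda)) = \int d\Gamma(\eta)\, G^\eta_{\beta,\Lambda,J}(f(\sigma_\Lambda)),
\]
where $G^\eta_{\beta,\Lambda,J}$ is the finite-volume Gibbs measure in $\Lambda$ with boundary condition $\eta$, i.e.\ with unnormalized weight $\exp\bigl(-\beta H_{\Lambda,J}(\sigma) - \beta W_\eta(\sigma_\Lambda)\bigr)$ and boundary energy $W_\eta(\sigma_\Lambda) = -\sum_{(x,y)\in\partial\Lambda} J_{xy}\,\sigma_x\,\eta_y$. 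The strategy is to compare $G^\eta_{\beta,\Lambda,J}$ with the boundary-free measure $G_{\beta,\Lambda,J}$ \emph{uniformly in $\eta$}, and then to integrate that comparison against $d\Gamma(\eta)$. The point is that the only discrepancy between the conditional weight and the boundary-free weight is the term $W_\eta$, since by convention $H_{\Lambda,J}$ excludes the boundary bonds.

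The central estimate is that the boundary energy is uniformly bounded: each summand $J_{xy}\sigma_x\eta_y$ has modulus $|J_{xy}|$, so $|W_\eta(\sigma_\Lambda)| \le \sum_{e\in\partial\Lambda}|J_e| =: S$ for every $\sigma_\Lambda$ and every $\eta$, whence $e^{-\beta S} \le \exp\bigl(-\beta W_\eta(\sigma_\Lambda)\bigr) \le e^{\beta S}$. Inserting these bounds \emph{separately} into the numerator and the denominator of $G^\eta_{\beta,\Lambda,J}(f)$ — legitimate because $f\ge 0$ keeps every term of the same sign — factors the boundary contribution out of both sums and produces
\[
e^{-2\beta S}\, G_{\beta,\Lambda,J}(f(\sigma_\Lambda)) \le G^\eta_{\beta,\Lambda,J}(f(\sigma_\Lambda)) \le e^{2\beta S}\, G_{\beta,\Lambda,J}(f(\sigma_\Lambda)),
\]
where, crucially, the bounding factors no longer depend on $\eta$.

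To conclude, I integrate this chain of inequalities against $d\Gamma(\eta)$. Since $G_{\beta,\Lambda,J}(f(\sigma_\Lambda))$ does not depend on $\eta$ and $\Gamma$ is a probability measure, the left- and right-hand sides pass through the integral unchanged, while the middle term integrates to $\Gamma(f(\sigma_\Lambda))$ by the DLR identity. Dividing by $G_{\beta,\Lambda,J}(f(\sigma_\Lambda))$ — which is strictly positive whenever $f$ is not identically zero, because all Boltzmann weights are strictly positive — yields exactly
\[
e^{-2\beta S} \le \frac{\Gamma(f(\sigma_\Lambda))}{G_{\beta,\Lambda,J}(f(\sigma_\Lambda))} \le e^{2\beta S},
\]
which is the claimed bound with $S = \sum_{e\in\partial\Lambda}|J_e|$.

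As for obstacles, the argument is essentially bookkeeping and I anticipate no serious difficulty. The only step demanding care is the correct setup of the DLR equation, namely recognizing that the conditional weight $G^\eta$ differs from $G_{\beta,\Lambda,J}$ solely through the boundary term $W_\eta$ across $\partial\Lambda$, the internal Hamiltonian $H_{\Lambda,J}$ being common to both. Once this is identified, the uniform bound on $W_\eta$ and the $\eta$-independence of $S$ make the final integration immediate.
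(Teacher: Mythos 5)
Your proposal is correct and follows essentially the same route as the paper: both apply the DLR equations to express $\Gamma(f(\sigma_\Lambda))$ as an average of finite-volume Gibbs expectations with external boundary configuration, bound the boundary interaction term uniformly by $\sum_{e\in\partial\Lambda}|J_e|$ in the numerator and denominator separately (using $f\ge 0$), and integrate the resulting $\eta$-independent bounds against $\Gamma$. No substantive differences.
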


\begin{proof}
Consider the Hamiltonian of interactions between spins in $\lambda$ and spins in $\Lambda^c$:
\begin{equation}
H_{\partial \Lambda, J}(\sigma)= \sum_{(x,y)\in \partial \Lambda}- J_{xy}\sigma_x\sigma_y\ .
\end{equation}
Clearly,
\begin{equation}
\label{eq: boundary bound}
\exp\Bigl(-\beta \sum_{e\in \partial \Lambda} |J_e|\Bigr)\leq \exp(-\beta H_{\partial\Lambda,J}(\sigma)) \leq \exp\Bigl(\beta \sum_{e\in \partial \Lambda} |J_e|\Bigr)
\end{equation}
Since $\Gamma$ is a Gibbs state the DLR equations imply
\begin{equation}
\Gamma(f(\sigma_\Lambda))
= \Gamma\Bigl(
\frac{\sum_{\sigma_\Lambda \in \{-1,+1\}^\Lambda} f(\sigma_\Lambda) \exp(-\beta (H_{\Lambda,J}(\sigma)) + H_{\partial\Lambda,J}(\sigma)) )}{\sum_{\sigma_\Lambda \in \{-1,+1\}^\Lambda} \exp(-\beta  (H_{\Lambda,J}(\sigma)) + H_{\partial\Lambda,J}(\sigma)))}
\Bigr)\ .
\end{equation}
Since $f$ is assumed positive, the bounds \eqref{eq: boundary bound} give the desired result.
\end{proof}

\begin{proof}[Proof of Proposition \ref{prop: upper}]
A direct application of Lemma \ref{lem: upper} with $f(\sigma_\Lambda)=\exp \beta H_{\Lambda,J}(\sigma)$ produces the bounds
\begin{equation}
-4\beta\sum_{e\in \partial \Lambda} |J_e|\leq \log \frac{\Gamma\big(\exp \beta H_{\Lambda,J}(\sigma) \big)}{\Gamma'\big(\exp \beta H_{\Lambda,J}(\sigma)\big)} \leq 4\beta \sum_{e\in \partial \Lambda} |J_e|\ .
\end{equation}
Therefore
\begin{equation}
\Bigl| \frac{F_\Lambda(J,\Gamma,\Gamma')}{|\partial \Lambda|}\Bigr| \leq \frac{4\beta}{|\partial \Lambda|}\sum_{e\in \partial \Lambda} |J_e|\ .
\end{equation}
The conclusion follows from the strong law of large numbers.

\end{proof}

\subsection{Proof of Corollary \ref{cor: 2d} }

To prove \eqref{eq: lower thm}, we need to improve Theorem \ref{thm: lower} from a lower bound on the variance to a lower bound on the moment generating function.
 For this, we use the following proposition which is a version of Proposition A.2.1 in~\cite{AW90}. 
It is a weaker result than the central limit theorem for martingale arrays.
It provides a lower bound on the moment generating function
of the martingale whenever a lower bound on the quadratic variation (in the probabilistic sense) is available.
\begin{prop}[Proposition A.2.1 in~\cite{AW90}]
\label{prop: clt}
For every $n\in \N$, let $(X_{n,k}, k\leq n)$ be a martingale with respect to the the filtration $(\F_{n,k}, k\leq n)$.
Suppose that the martingale differences $\Delta X_{n,k}:= X_{n,k}-X_{n,k-1}$ satisfy the two conditions below
\begin{enumerate}
\item There exists $\sigma^2$ such that 
\begin{equation}
\lim_{n\to\infty} \prob\Bigl( \sum_{k=1}^n \E[\Delta X_{n,k}^2 | \F_{n,k-1}] \leq \sigma^2-\delta\Bigr)\to 0 \ \text{ for every $\delta>0$}\ .
\end{equation}
\item For every $\delta >0$, 
\begin{equation}
\lim_{n\to\infty} \sum_{k=1}^n \E\Bigl[ \Delta X_{n,k}^2 1_{\{| \Delta X_{n,k}|>\delta\}}\Bigr] =0\ .
\end{equation}
Then
\begin{equation}
\liminf_{n\to\infty} \E[\exp t X_{n,n}]\geq e^{\frac{t^2\sigma^2}{2}}\ .
\end{equation}
\end{enumerate}
\end{prop}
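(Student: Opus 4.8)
The plan is to extract, from the martingale central limit machinery, only the one-sided estimate needed here: a lower bound on the moment generating function driven by a lower bound on the predictable quadratic variation. Set $\xi_{n,k}:=\Delta X_{n,k}$, let $v_{n,k}:=\E[\xi_{n,k}^2\mid\F_{n,k-1}]$ be the conditional variances and $V_n:=\sum_{k=1}^n v_{n,k}$ the predictable quadratic variation, and assume $X_{n,0}=0$ without loss of generality. The engine of the proof is a one-step conditional lower bound
\[
c_{n,k}:=\E\bigl[e^{t\xi_{n,k}}\mid\F_{n,k-1}\bigr]\ \geq\ 1+\tfrac{t^2}{2}\,v_{n,k}-r_{n,k},
\]
in which the linear term is killed by the martingale identity $\E[\xi_{n,k}\mid\F_{n,k-1}]=0$, the quadratic term comes from the elementary inequality $e^{u}\ge 1+u+\tfrac{u^2}{2}(1-c|u|)$ valid for $|u|\le 1$, and $r_{n,k}\ge 0$ is a remainder. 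First I would reduce to effectively bounded increments: truncating $\xi_{n,k}$ at a scale $\delta$ and splitting accordingly, hypothesis (2) forces the contribution of the large increments to $\sum_k r_{n,k}$ to be $o(1)$, while on the truncated part the Taylor estimate bounds $r_{n,k}$ by $C|t|^{3}\delta\,v_{n,k}$; summing gives $\sum_k r_{n,k}\le C|t|^3\delta\,V_n+o(1)$, which is negligible once $\delta$ is small.

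Next I would convert these one-step bounds into a statement about $X_{n,n}$ itself. The clean device is the exponential martingale $\mathcal E_{n,k}:=e^{tX_{n,k}}\big/\prod_{j\le k}c_{n,j}$, a nonnegative, mean-one $\prob$-martingale, and the associated tilted probability measure $dQ_n:=\mathcal E_{n,n}\,d\prob$. Since $e^{tX_{n,n}}=\mathcal E_{n,n}\prod_{j=1}^n c_{n,j}$, one has $\E_{\prob}[e^{tX_{n,n}}]=\E_{Q_n}\bigl[\prod_{j=1}^n c_{n,j}\bigr]$, and Jensen's inequality applied to $\prod_j c_{n,j}=\exp(\sum_j\log c_{n,j})$ yields
\[
\E_{\prob}[e^{tX_{n,n}}]\ \geq\ \exp\Bigl(\E_{Q_n}\Bigl[\sum_{j=1}^n\log c_{n,j}\Bigr]\Bigr).
\]
Feeding in $\log c_{n,j}\ge \tfrac{t^2}{2}v_{n,j}-r'_{n,j}$ (from the one-step bound together with $\log(1+x)\ge x-x^2$ for $x$ small, which applies since the individual $v_{n,j}$ are uniformly small under the Lindeberg hypothesis) reduces the problem to showing $\E_{Q_n}[V_n]\ge \sigma^2-\delta$ for large $n$. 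To close, I would invoke hypothesis (1): on the good event $G_n:=\{V_n\ge\sigma^2-\delta\}$ one has $\prob(G_n)\to 1$, and I would transfer this to the tilted measure using that $\mathcal E_{n,n}$ is uniformly integrable—again a consequence of the Lindeberg condition (2)—so that $Q_n(G_n)\to1$ and the exponential weight on $G_n^c$ is asymptotically harmless. Letting $\delta\downarrow 0$ then gives $\liminf_n\E[e^{tX_{n,n}}]\ge e^{t^2\sigma^2/2}$.

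The main obstacle is precisely this last transfer step. The hypotheses control $V_n$ and the increments under $\prob$, but the natural lower bound on the moment generating function lives under the tilted measure $Q_n$, and one must rule out that the exponential weighting $\mathcal E_{n,n}$ concentrates mass on the bad event $\{V_n<\sigma^2-\delta\}$ or on configurations with an atypically large increment. This is exactly where hypothesis (2) is indispensable: it delivers the uniform-in-$n$ control needed to show that the tilt does not distort the quadratic variation and that the bounded-increment reduction is legitimate. An equivalent and perhaps more transparent route, which I would keep in reserve, is to show directly that $\exp\bigl(tX_{n,k}-\tfrac{t^2}{2}W_{n,k}\bigr)$ (with $W_{n,k}=\sum_{j\le k}v_{n,j}$) is an approximate submartingale, yielding $\E[e^{tX_{n,n}}]\ge \E[e^{\frac{t^2}{2}V_n}]-o(1)$ and then bounding the right-hand side below by $e^{\frac{t^2}{2}(\sigma^2-\delta)}\prob(G_n)$; this sidesteps the change of measure but shifts the same difficulty into verifying the submartingale property uniformly under truncation.
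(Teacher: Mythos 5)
First, a point of reference: the paper itself offers no proof of this proposition --- it is quoted as ``a version of Proposition A.2.1 in [AW90]'' and used as an external input --- so there is no in-paper argument to compare yours against. Your overall architecture (the one-step conditional bound $\E[e^{t\Delta X_{n,k}}\mid\F_{n,k-1}]\ge 1+\tfrac{t^2}{2}v_{n,k}-r_{n,k}$ with a Lindeberg-based truncation controlling $\sum_k r_{n,k}$, followed by the exponential change of measure and Jensen) is the classical route and essentially the one taken in the appendix of Aizenman--Wehr.

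There is, however, a genuine gap at exactly the step you flag as the main obstacle. The assertion that $\mathcal{E}_{n,n}$ is uniformly integrable ``as a consequence of the Lindeberg condition'' is false: the hypotheses place no \emph{upper} bound on $V_n=\sum_k v_{n,k}$, and uniform integrability fails whenever $V_n\to\infty$ on an event of positive probability. Concretely, let $\Delta X_{n,1}=B/n$ with $B=\pm1$ fair; conditionally on $B=+1$ let the remaining increments be i.i.d.\ $\pm n^{-1/4}$ (so $V_n\sim\sqrt{n}$), and on $B=-1$ let them be $\pm n^{-1/2}$ (so $V_n\sim1$). Both hypotheses hold with $\sigma^2=1$, yet on $\{B=+1\}$ one has $\mathcal{E}_{n,n}\approx e^{tn^{1/4}G-t^2\sqrt{n}/2}\to0$ a.s.\ (with $G$ standard Gaussian) while $\E[\mathcal{E}_{n,n}1_{\{B=+1\}}]\to\tfrac12$; so $(\mathcal{E}_{n,n})_n$ is not uniformly integrable, and you cannot deduce $Q_n(G_n)\to1$ from $\prob(G_n)\to1$ by uniform integrability alone. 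The standard repair must be inserted: after truncating the increments at level $\delta$, stop the martingale at the predictable time $\tau_n=\min\{k:\ V_{n,k+1}>\sigma^2\}$. Since $e^{tX_{n,k}}$ is a nonnegative submartingale, optional sampling gives $\E[e^{tX_{n,n}}]\ge\E[e^{tX_{n,n\wedge\tau_n}}]$, while the stopped predictable variation satisfies $\min(V_n,\sigma^2-\delta^2)\le V_{n,n\wedge\tau_n}\le\sigma^2$, so hypothesis (1) survives and $V$ is now bounded. With $V$ bounded, the exponential supermartingale estimate yields $\sup_n\E[\mathcal{E}_{n,n}^2]<\infty$, uniform integrability holds, and your transfer to $Q_n$ goes through; letting $\delta\downarrow0$ finishes the argument. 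Your reserve route via the approximate submartingale $e^{tX_{n,k}-\frac{t^2}{2}W_{n,k}}$ has the identical defect (one needs $\E[Z_n 1_{G_n^c}]\to0$, which again fails from the stated hypotheses alone) and is cured by the identical stopping argument. With that insertion the proof is correct.
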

The proposition will be used with the following quantities.
Consider the lexicographic order on $\Z^2$: $(x_1, x_2) \preceq (y_1, y_2)$ iff $x_1 < y_1$ or $x_1 = y_1$ and $x_2 \leq y_2$.  It induces an order on $E(\Z^2)$ by enumerating the edges in order of the vertices at which they originate, with (say) the edge on top of a vertex preceding the edge on its right. We will use the same notation for both orders and thus write $e\preceq e'$ if $e$ precedes $e'$ in the above sense on $E(\Z^2)$.

Define the following $\sigma$-algebras
\begin{equation}
\begin{aligned}
\F_{\preceq e}&:=\sigma\Bigl( J_{xy}: (x,y)\preceq e\Bigr)\\
\F_{\Lambda}&:=\sigma\Bigl( J_{xy}: (x,y)\in E(\Lambda) \Bigr)\\
\F_{\Lambda, \preceq e}&:= \F_{\Lambda}\cap \F_{\preceq e}
\end{aligned}
\end{equation}
It will be convenient to enumerate the edges in $E(\Lambda)$ in the above order: $E(\Lambda)=\{e_1,e_2,\dots, e_{k},\dots e_{|E(\Lambda)|}\}$.
We define
\begin{equation}
\begin{aligned}
Y_{\Lambda,k}&:= M\Bigl( F_\Lambda | \F_{\Lambda, \preceq e_k}\Bigr)\\
\Delta Y_{\Lambda,k}&:= Y_{\Lambda,k}-Y_{\Lambda,k-1}\ .
\end{aligned}
\end{equation}
Plainly, for each $\Lambda$, $(Y_{\Lambda,k}, k\leq |E(\Lambda)|)$ is a martingale for the filtration $(  \F_{\Lambda, \preceq e_k}, k\leq |E(\Lambda)|)$.
Note that with this notation $\sum_{k=1}^{|E(\Lambda)|}= \Delta Y_{\Lambda,k}=  M\Bigl( F_\Lambda | \F_{\Lambda}\Bigr)- M( F_\Lambda)$.
Equation \eqref{eq: lower thm} will be proved by showing that 
\begin{equation}\Bigl(\frac{Y_{\Lambda,k}}{\sqrt{|E(\Lambda)|}}, k\leq |E(\Lambda)|\Bigr)\end{equation}
 satisfies the two hypotheses of Proposition \ref{prop: clt}.
In what follows it will be convenient to use the notation $J_{\prec e}$ to denote $(J_{e'}, e'\prec e)$ and similarly for $J_{\succ e}$. 
We can define using this notation
\begin{equation}
\overline F_\Lambda(J)= \overline F_\Lambda(J_{\prec e}, J_e, J_{\succ e}):= M(F_\Lambda| \F_\Lambda)\ .
\end{equation}

We start with the simpler part.
\begin{proof}[Proof of Hypothesis 2)]
Observe that, by Lemma \ref{lem: deriv},  $\Delta Y_{\Lambda,k}$ can be represented as
\begin{equation}
\label{eq: delta deriv}
\begin{aligned}
\Delta Y_{\Lambda,k}(J_{e_k}, J_{\prec e_k})&=\int \nu(d J_{\Lambda^c})\int \nu(dJ_{\succ e_k} )\Bigl\{  \overline F_\Lambda(J_{\prec e_k}, J_{e_k}, J_{\succ e_k}) -\int \nu(dy)\overline F_\Lambda(J_{\prec e_k}, y, J_{\succ e_k}) \Bigr\}\\
&= \int \nu(d J_{\Lambda^c})\int \nu(dJ_{\succ e_k} )\int \nu(dy)\Bigl\{\int_y^{J_{e_k}}D_k(J_{\prec e_k}, s, J_{\succ e_k})\ ds \Bigr\}
\end{aligned}
\end{equation}
where for $e_k=(x,y)$ we defined $D_k(J)=\beta\Bigl\{ \kappa_J(\Gamma(\sigma_x\sigma_y))-\kappa'_J(\Gamma'(\sigma_x\sigma_y))\Bigr\}$.
In particular,
\begin{equation}
|\Delta Y_{\Lambda,k}(J_{e_k}, J_{\prec e_k})| \leq 2\beta \int \nu(dy) |J_{e_k}-y|\leq 2\beta( |J_{e_k}| +\nu(|J_e|))\ ,
\end{equation}
and 
\begin{equation}
\nu\Bigl(\Delta Y_{\Lambda,k}^2\Bigr)\leq 16\beta^2\  \nu(J_e^2)\ ,
\end{equation}
for all $k\leq |E(\Lambda)|$. Also $\nu\Bigl(\Delta Y_{\Lambda,k}^4\Bigr)<\infty$ uniformly in $k$ if we suppose that $\nu(J_e^4)<\infty$. 

For $\delta>0$, the above implies that
\begin{equation}
\nu\Bigl\{ |\Delta Y_{\Lambda,k}|>\delta \sqrt{|E(\Lambda)|}\Bigr\}\leq \frac{\nu\Bigl(\Delta Y_{\Lambda,k}^2\Bigr)}{\delta^2|E(\Lambda)|} \to 0\  \text{as $\Lambda\to \Z^d$, uniformly in $k$.}
\end{equation}
Hence, by Cauchy's inequality, for some constant $C>0$
\begin{equation}
 \sum_{k=1}^{|E(\Lambda)|}\nu \Bigl(\frac{\Delta Y_{\Lambda,k}^2}{|E(\Lambda)|} 1_{\{| \Delta Y_{\Lambda,k}|>\delta|E(\Lambda)|^{1/2}\}} \Bigr)  \leq  C  \frac{\max_k\ \nu\Bigl(\Delta Y_{\Lambda,k}^4\Bigr)^{1/2}\nu(J_e^2)^{1/2}}{\delta|E(\Lambda)|^{1/2}}\to 0  \ .
\end{equation}
\end{proof}

\begin{proof}[Proof of Hypothesis 1)]
We show that
\begin{equation}
\label{eqn: hypo2}
\lim_{\Lambda\to\Z^d}\frac{1}{|E(\Lambda)|}  \sum_{k=1}^{|E(\Lambda)|}\nu(\Delta Y^2_{\Lambda,k}| \F_{\Lambda, \preceq e_{k-1}})= \sigma^2 
\end{equation}
where the convergence holds in $L^1(\nu)$ (hence also  in $\nu$-probability) and $\sigma^2$ is a constant.
Note that, under Assumption \ref{ass}, $\sigma^2$ is non-zero.
 Indeed, we have by conditioning
\begin{equation}
\frac{1}{|E(\Lambda)|}\text{Var}\Bigl( M( F_\Lambda| \F_\Lambda) \Bigr)=\frac{1}{|E(\Lambda)|}\text{Var}\Bigl( \sum_{k=1}^{|E(\Lambda)|} \Delta Y_{\Lambda,k}\Bigr)
=\frac{1}{|E(\Lambda)|}\sum_{k=1}^{|E(\Lambda)|} \nu\Bigl(\Delta Y^2_{\Lambda,k}\Bigr)\ .
\end{equation}
The right side converges to $\sigma^2$ by taking the expectation in \eqref{eqn: hypo2} ($L^1$-convergence implies convergence of the expectations). 
On the other hand, the proof of Theorem \ref{thm: lower} shows that the left side is strictly greater than $0$ uniformly in $\Lambda$.

The proof is based on the $L^2$-ergodic theorem. To use it, we need to work around the dependence on $\Lambda$.
For this purpose, define the following quantities, 
\begin{equation}
\begin{aligned}
\widetilde Y_{\Lambda,k}&:=M\Bigl( F_\Lambda | \F_{\preceq e_k}\Bigr)\\
\Delta \widetilde Y_k&:=Y_k-Y_{k-1}
\end{aligned}
\end{equation}
Note that $Y_{\Lambda, k}=\nu(\widetilde Y_{\Lambda,k}|\F_\Lambda).$ 
Similarly as in \eqref{eq: delta deriv}
\begin{equation}
\label{eq: delta deriv2}
\begin{aligned}
\Delta \widetilde Y_k 
&= \int \nu(dJ_{\succ e_k}) \int \nu(dy)\Bigl\{\int_y^{J_{e_k}}D_k(J_{\prec e_k}, s, J_{\succ e_k})\ ds \Bigr\}
\end{aligned}
\end{equation}
where for $e_k=(x,y)$ we defined $D_k(J)=\beta\Bigl\{ \kappa_J(\Gamma(\sigma_x\sigma_y))-\kappa'_J(\Gamma'(\sigma_x\sigma_y))\Bigr\}$.
In particular, $\Delta \widetilde Y_k$ does not depend on $\Lambda$ (as the notation suggests). Moreover, it is easily checked that $\nu(\Delta \widetilde Y_k ^4)<\infty)$.

The crucial point is the fact that the random variables $f_{e_k}(J):=\nu(\Delta \widetilde Y_k^2| \F_{\preceq e_{k-1}})$ are translates, that is if $T$ is a translation mapping $e_k$ to $e_{l}$ then
\begin{equation}
f_{e_{l}}(J)=f_{e_k}(T^{-1}J)\ .
\end{equation}
(Note that horizontal and vertical edges are not translates. Thus, the
functions $f_{e_k}(J)$ can be reduced to two functions up to
translation, not one.)  We can therefore apply von Neumann's ergodic
theorem (see e.g. \cite{reed-simon}) to conclude that
\begin{equation}
\label{eqn: ergodic}
\frac{1}{|E(\Lambda)|}  \sum_{k=1}^{|E(\Lambda)|} \nu(\Delta \widetilde Y_{k}^2|  \F_{\preceq e_{k-1}})
\end{equation}
converges in $L^2(\nu)$ to a constant, say $\sigma^2$.

To prove \eqref{eqn: hypo2}, it remains to show that we can replace  $\nu(\Delta \widetilde Y_{k}^2|  \F_{\preceq e_{k-1}})$ by $\nu(\Delta \widetilde Y_{k}^2|  \F_{\Lambda,\preceq e_{k-1}})$ and
 $\nu(\Delta \widetilde Y_{k}^2|  \F_{\Lambda,\preceq e_{k-1}})$ by  $\nu(\Delta Y_{\Lambda, k}^2|  \F_{\Lambda,\preceq e_{k-1}})$
in \eqref{eqn: ergodic}. In other words, we need to show that we do not lose much by averaging the couplings outside $\Lambda$. 
Precisely, we show, as $\Lambda\to \Z^2$
\begin{eqnarray}
\label{eq: approx1}
\nu\Bigl(\Bigl| \nu(\Delta \widetilde Y_{k}^2|  \F_{\Lambda,\preceq e_{k-1}}) -  \nu(\Delta Y_{\Lambda, k}^2|  \F_{\Lambda,\preceq e_{k-1}})\Bigr|\Bigr) &\to 0 \\
\label{eq: approx2}
\nu\Bigl(\Bigl| \nu(\Delta \widetilde Y_{k}^2|  \F_{\preceq e_{k-1}}) -  \nu(\Delta \widetilde Y_{\Lambda, k}^2|  \F_{\Lambda,\preceq e_{k-1}})\Bigr|\Bigr) &\to 0
\end{eqnarray}
uniformly in $k$. 
(In fact, as it will be clear from our reasoning, the convergence is uniform for $e_k$ not too close to the boundary of $\Lambda$.)
The convergence  is in $L^1$ and enough for our purpose.

We first show \eqref{eq: approx1}. By Jensen's inequality, the left-hand side is smaller than
\begin{equation}
\nu\Bigl(\Bigl| \Delta \widetilde Y_{k}^2-  \Delta Y_{\Lambda, k}^2\Bigr|\Bigr)
\end{equation}
Factoring the difference of the squares and using the Cauchy-Schwarz inequality, we get the upper bound 
\begin{equation}
\nu\Bigl(\Bigl| \Delta \widetilde Y_{k}-  \Delta Y_{\Lambda, k}\Bigr|^2\Bigr)^{1/2}\nu\Bigl(\Bigl| \Delta \widetilde Y_{k}+ \Delta Y_{\Lambda, k}\Bigr|^2\Bigr)^{1/2}
\end{equation}
The second term is finite, uniformly in $k$, as a consequence of triangle's inequality and equations \eqref{eq: delta deriv} and \eqref{eq: delta deriv2}.
It remains to prove that $\nu\Bigl(\Bigl| \Delta \widetilde Y_{k}-  \Delta Y_{\Lambda, k}\Bigr|^2\Bigr)$ goes to $0$ uniformly in $k$.
Again, by Jensen's inequality and the representations \eqref{eq: delta deriv} and \eqref{eq: delta deriv2}, we have
\begin{equation}
\nu\Bigl(\Bigl| \Delta \widetilde Y_{k}-  \Delta Y_{\Lambda, k}\Bigr|^2\Bigr)\leq
\nu\Bigl(\Bigl|W_k- \nu(W_k|\F_\Lambda)\Bigr|^2\Bigr)
\end{equation}
where 
\begin{equation}
W_k=W_k(J_{\prec e_k}, J_{\succ e_k}):=\int \nu(dy)\Bigl\{\int_y^{J_{e_k}}D_k(J_{\prec e_k}, s, J_{\succ e_k})\ ds \Bigr\}\ .
\end{equation}
Note that all $W_k$'s have the same $L^2$-norm since the function is
translation-invariant. However, the $e_k$'s do not have the same
location in the box $\Lambda$. Thus, we cannot conclude right away,
using the $L^2$-martingale convergence theorem, that the term converges
to $0$ uniformly in $k$. To get around this, note that
$\nu\Bigl(\Bigl|W_k- \nu(W_k|\F_\Lambda)\Bigr|^2\Bigr)$ decreases as
$\Lambda$ grows. Therefore, it suffices to take the conditioning
on a smaller box than $\Lambda$. For example, take each $e_k$ that is
at least $|\Lambda|^{1/4}$ (say) away from the boundary (there are
$(1+o(1))|E(\Lambda)|$ such edges), and consider a box $B_k$ of area
$|\Lambda|^{1/2}$ centered at $e_k$. Then, by the previous remark,
\begin{equation}
\nu\Bigl(\Bigl|W_k- \nu(W_k|\F_\Lambda)\Bigr|^2\Bigr)\leq \nu\Bigl(\Bigl|W_k- \nu(W_k|\F_{B_k})\Bigr|^2\Bigr)
\end{equation}
where $\F_{B_k}$ is a box of area $|\Lambda|^{1/2}$ centered at $e_k$. The right side is now independent on $k$ and goes to $0$ as $\Lambda\to \Z^2$ by  the $L^2$-martingale convergence theorem.

For \eqref{eq: approx2}, the term is smaller than
\begin{equation}
\nu\Bigl(\Bigl| \Delta \widetilde Y_{k}^2 -  \nu(\Delta \widetilde Y_{\Lambda, k}^2|  \F_{\Lambda})\Bigr|\Bigr)\ .
\end{equation}
With the notation introduced above, we can write
\begin{equation}
 \Delta \widetilde Y_{k}= \nu( W_k| \F_{\preceq e_k}). 
 \end{equation}
 Again, this quantity does not depend on $k$. 
 Using the same reasoning as above by reducing $\F_\Lambda$ to $\F_{B_k}$, we deduce the convergence to $0$ uniformly in $k$ for $e_k$ at least $|\Lambda|^{1/4}$ away from the boundary.
\end{proof}

\section{Proof of Corollary \ref{cor: countable}}

The idea of the proof is to use the weights in the decomposition of
the metastate as a {\it tag} to distinguish the states as $J$ is
varied.

Suppose that the metastate $\kappa_\cdot$ is of the form $\sum_{\alpha \in \mathcal A}p_\alpha \delta_{\Gamma^\alpha}$. 
First, we observe that the set of weight $(p_\alpha,\alpha \in \mathcal A)$ does not depend on $J$.  
Indeed, by the translation covariance of the metastate $\kappa_{TJ}(d\Gamma)=\kappa_J(d\ T\Gamma)$. 
In particular, the weight associated with $\Gamma_\alpha$ is the same as the weight of $T\Gamma_\alpha$. 
This implies that the set of weights is translation-invariant as a function of $J$. Thus, it must be constant $\nu$-almost surely.
By hypothesis, there must exist $\alpha,\alpha'$ such that $p_\alpha\neq p_\alpha'$. Let $\Gamma^\alpha$ and $\Gamma^{\alpha'}$ 
be the corresponding states. These states belong to $\mathcal G_J$ and we write $\Gamma^\alpha=\Gamma_J^\alpha$ and $\Gamma^{\alpha'}=\Gamma_J^{\alpha'}$
to make the dependence on $J$ explicit. 
Therefore each tag yields a well-defined measurable map of the couplings to $M_1(\Sigma)$
\begin{equation}
J\mapsto \kappa^\alpha_J:=\delta_{\Gamma^\alpha_J} \qquad J\mapsto \kappa^{\alpha'}_J:=\delta_{\Gamma^{\alpha'}_J} \ .
\end{equation}
(If more than one state has weight $p_\alpha$, we can take $\Gamma^\alpha$ 
to be the weighted linear combination of these states without loss of generality.)

To prove the corollary, we show that $\kappa^\alpha_J$ and
$\kappa^{\alpha'}_J$ satisfy the properties of a metastate in
Definition \ref{df: metastate} and also satisfies Assumption
\ref{ass}.  This is in contradiction with Corollary \ref{cor: 2d}.
The property of support of metastates on Gibbs states is obvious.  The
translation covariance is also clear since the weight $p_\alpha$ puts
$\Gamma^\alpha_J$ and $T\Gamma^\alpha_J$ in bijection by the
translation covariance of the original metastate $\kappa$. For the
same reason, $\Gamma^\alpha_J$ is mapped to $L_{J_B}\Gamma^\alpha_J$
when we consider $\kappa_{J+J_B}$. This ensures coupling
covariance. Therefore, Corollary \ref{cor: 2d} holds.  On the other
hand, by hypothesis, $\Gamma^\alpha$ and $\Gamma^{\alpha'}$ are
incongruent, thus by Definition \ref{df: incongruent}, there exists
$\varepsilon>0$ such that
\begin{equation}
\label{eqn: density}
\liminf_{\Lambda\to\Z^2}\frac{1}{|E(\Lambda)|} \sum_{(x,y)\in E(\Lambda)} 1_{\{(x,y)\in E(\Lambda): |\Gamma^\alpha_J(\sigma_x\sigma_y)-\Gamma^{\alpha'}_J(\sigma_x\sigma_y)|>\varepsilon\}} >0
\end{equation}
The left-hand side is a translation-invariant function of the $J$ since $\Gamma^\alpha_{TJ}=T\Gamma_J^\alpha$. 
In particular, the limit  in \eqref{eqn: density} exists and equals
\begin{equation}
\nu\left\{J: |\Gamma^\alpha_J(\sigma_x\sigma_y)-\Gamma^{\alpha'}_J(\sigma_x\sigma_y)|>\varepsilon \right\}
\end{equation}
In particular, Assumption \ref{ass} is fulfilled. This concludes the proof of the corollary.

\end{document}